\newcommand{\ER}{Erd\H{o}s-R\'enyi}
\newtheorem{prop}{Proposition}
\definecolor{mycol}{rgb}{0,0,1}
\begin{document}

\title{Credit Valuation Adjustment in Financial Networks}

\author[1]{Irena Barja\v{s}i\'{c}}
\author[2,3]{Stefano Battiston\footnote{Corresponding author. Email: stefano.battiston@uzh.ch}}
\author[4]{Vinko Zlati\'c}

\affil[1]{Department of Physics, Faculty of Science, University of Zagreb}
\affil[2]{Dept. of Banking and Finance, Univ. of Zurich, Switzerland} 
\affil[3]{Ca' Foscari Univ. of Venice, Italy}
\affil[4]{Theoretical Physics Division, Institute Rudjer Bov{s}kovi\'c, Bijeni\v{c}ka cesta 54, Zagreb, Croatia}

\date{Version: 
March 30th 2023}

\maketitle

\begin{abstract}
Credit Valuation Adjustment captures the difference in the value of derivative contracts when the counterparty default probability is taken into account. However, in the context of a network of contracts, the default probability of a direct counterparty can depend substantially on the default probabilities of indirect counterparties.
We develop a model to clarify when and how these network effects matter for CVA, in particular in the presence of correlation among counterparties defaults. We provide an approximate analytical solution for the default probabilities. This solution allows for identifying conditions on key parameters such as network degree, leverage and correlation, where network effects yield large differences in CVA (e.g. above 50\%), and thus relevant for practical applications. 
Moreover, we find evidence that network effects induce a multi-modal distribution of CVA values.
\end{abstract}

\section{Introduction}
Credit Valuation Adjustment (CVA) is the standard tool to quantify counterparty risk. In the presence of new information or changes in expectations about counterparties, CVA can give rise to substantial losses on portfolios even in the absence of actual counterparties' defaults\footnote{According to the Basel Committee on Banking Supervision \citep{basel2011capital,basel2015review}, about 2/3 of over-the-counter (OTC) credit derivatives losses in the 2008 financial crisis could be attributed to CVA}. 
Since 2010, CVA risk has become a core component of counterparty credit risk in the current international banking regulation framework of Basel III \citep{basel2019cva}. 
The importance of CVA has been increasingly recognised by scholars in a body of academic works investigating key aspects, including the pricing of portfolios of contracts
\citep{brigo2014arbitrage,bo2014bilateral}, 
wrong-way-risk \citep{brigo2018wwr,glasserman2018wwr}, and various important computational challenges \citep{chataigner2019credit,crepey2015bilateral}.

At the same time, there has also been an increased recognition that modelling the financial system as a network is a precondition to understand and manage financial stability from a macro-prudential perspective \citep{haldane2013rethinking,stiglitz2010risk}. This is widely reflected today in the policies of financial authorities both in the US \citep{yellen2013interconnectedness} and in the EU \citep{draghi2017building}, as network effects (both via counterparty risk and via common assets) also played had a key role in the 2008 financial crisis. Since then, building on early works \citep{eisenberg2001systemic}, the academic literature has developed an entire stream of new network models
\citep{stiglitz2010risk,acemoglu2015systemic,amini2016resilience,glasserman2016contagion,veraart2013failure,battiston2016price,bardoscia2017pathways,banerjee2022pricing}.

While both CVA and network effects have been investigated, alone, in various contexts, there is still a limited understanding of how CVA adjustment depends on the structure of the network of financial contracts, in particular in the presence of correlations across shocks on counterparties' assets. In this paper we develop a model to clarify precisely this issue.

In the case of a derivative contract between two parties, say A and B, CVA is computed by A as the risk-neutral expected value of the losses conditional to B's default. Practitioners typically compute such expected value under the assumption that the probability of default of B is governed by an exogenous stochastic process. While CVA captures adjustments in default probability of direct counterparties, it usually neglects adjustments in default probability of indirect counterparties, as pointed out in \citet{banerjee2022pricing}. However, if A and B are embedded in a network of contracts, then indirect counterparties of B can have a very important impact on B's default probabilities \citep{barucca2020network,banerjee2022pricing}.
Hereafter, we refer to Credit Valuation Network-Adjusted (CVNA) as the adjustment of credit contracts value computed taking into account the network effects, i.e. the default probabilities of all agents in the system, as opposed to the adjustment solely based on the default probability of direct counterparties, which we denote as CVDA (Credit Valuation Direct-Adjusted) when it is otherwise ambiguous. 
In this paper, we investigate 1) under which conditions, CVNA can substantially larger in magnitude compared to CVDA and 2) how the difference between CVNA and CVDA depends both on the network structure and on the correlation among shocks hitting the assets of all counterparties in the system. 

In line with the literature on financial networks \citep[see e.g.,][]{eisenberg2001systemic,amini2016resilience,veraart2013failure,barucca2020network}, we model a network of financial agents, each holding a portfolio of other agents' liabilities as well as a portfolio of \textit{external} assets, i.e. that are shocked exogenously. However, we interpret the linkages more generically as derivative contracts.
We are aware that if counterparties establish derivative contracts having as underlying other counterparties in the system, the existence and uniqueness of the solution to the classic clearing problem stated in \citet{eisenberg2001systemic} breaks down as shown in \citet{schuldenzucker2020default}. Here, we avoid this case by assuming that all contracts have as underlying entities that are exogenous to the system. 

A first innovation compared to the above mentioned literature, is our specific focus on the difference between CVNA and CVDA and its drivers. To our knowledge this has not been quantified systematically. 
In our model, CVNA is proportional to the average of the default probability $q$ of a counterparty, computed taking into account the whole network, while CVDA is proportional to the average default probability, denoted as $p_1$, of a counterparty when only its intrinsic sources of shock are considered.  
A second, key innovation in our model regards the introduction of correlation among agents' portfolios of external assets. To keep tractability both in terms of analytical and numerical computations, but without loss of generality, we capture such correlation by means of the correlation among shocks hitting the agents' equity levels.



In such setting, we are able to derive an approximated analytical solution for the agents' default probability $q$ as a function of key parameters such as network degree $k$, agents' financial leverage $\Lambda$ and correlation $\rho$ among agents' shocks. 
We identify a wide region in the parameter space where $CVNA >> CVDA$. In simplified terms, the conditions for that to hold can be characterised as follows: 
i) in absence of correlation,either low enough leverage and low enough network degree, or, conversely, high enough leverage, across all levels of network degree;
ii) in the presence of correlation: high enough leverage, in this case for all degree levels.
In other words, with high enough leverage, we find $CVNA >> CVDA$ both with and without correlation.

Moreover, we find that correlation can lead to the fact that the default probability distribution 
becomes multimodal with non negligibile probability mass close to 1. Therefore care is required when the average value of the default probability is used for CVA. 


\subsection{Related work}
This paper is related to several strands of literature. 

First, a recent body of work focuses on improving the standard CVA computation. \citet{bo2014bilateral} have obtained an analytical framework for calculating the bilateral CVA for a large portfolio of credit default swaps. Wrong way risk is addressed both in works of \citet{brigo2018wwr} and \citet{glasserman2018wwr}, where in the first paper a semi-analytical approach is proposed instead of the usual numerical techniques, and in the second one the bounds for the CVA are obtained by using marginal distributions of credit and market risk, and varying the dependence between them. The use of genetic algorithms for optimizing the portfolio CVA is explored in \citet{chataigner2019credit}. \citet{crepey2015bilateral} develop a way to calculate CVA under funding constraints using reduced-form backward stochastic differential equations. \citet{abbas2022pathwise} use neural networks to reduce the computation time for a pathwise CVA calculation.

Since in our model we compute the fixed point of a clearing problem, conditioned to external shock, our paper is related to the literature considering the problem of financial clearing of the entire network of contracts. The question of how much each agent owes to whom was formalised as a clearing problem in the seminal paper of \citet{eisenberg2001systemic}, one of the first work to model the financial dependencies as a directed network. The problem of the interdependence of payments is turned into a fixed point problem, and the authors show that a unique clearing vector exists under very general conditions on the network structure, for contract of type ``long''. A more realistic setting with non-zero bankruptcy costs is considered in \citet{veraart2013failure}. \citet{suzuki2002valuing} deal with the effect of cross-holdings of equities, while \citet{bernstein2018dynamic} introduce the time dynamics of the interbank liabilities in the Eisenberg and Noe model and then study the default risk dynamically. \citet{schuldenzucker2020default} show that, if financial contracts are of type ``short'', such as credit default swaps, then the solution to the clearing problem might not be unique, or even cease to exist.

Our work is also related to the strand of literature concerning the network valuation frameworks, i.e. approaches to compute the valuation of contracts, whose values are endogenously interdependent via the network.
In \cite{gourieroux2012bilateral}, a decomposition of an exogenous shock into the direct and contagion part is provided, however, the exact solution to the system of equations is limited by the number of banks it can be applied to. 
\citet{banerjee2022pricing} employ a comonotonicity framework to the network of banks to circumvent the ``curse of dimensionality'' in the analytical treatment of the valuation problem.
\citet{barucca2020network} provide a general network valuation framework that encompasses both clearing models and ex-ante valuation models and accounts for the uncertainty regarding the external assets. 

For our numerical computations, we build on the results of \citet{gai2010contagion,KOBAYASHI2014model}, who showed that some classes of financial contagion models have an analytical representation in the form of the threshold model by \citet{watts2002threshold}. We rely in this insight to provide an analytical framework that includes correlated shocks. 
\citet{KOBAYASHI2014model} consider external asset returns that are normally distributed but without correlation. \citet{kobayashi2013network} consider a network of 5 banks and introduce correlations on their external assets, for some specific examples of the dependence structure. In \citet{nowak2011individual} a network of 5 banks is used as well to study the systemic cost of bank failures, considering also the degree of correlations on external asset prices as a parameter to vary. 

Some analytical results on the expected size of the default cascades for large networks were obtained, e.g., by \cite{battiston2012default,roukny2013default,amini2016resilience}.
The analysis of cascades in the general framework of threshold models was mainly covered in the physics literature \citep{gleeson2007seed,gleeson2013binary}, where, differently from our case, the motivation for introducing the correlations on thresholds was missing. 

\section{Model}

\subsection{Model description}

We investigate a system of $N$ banks (referred to also as ``vertices'' depending on the context) interconnected in a network of credit contracts. 
%
In addition to contracts among banks, banks have holdings of external assets, e.g. securities issued by actors outside the set of banks. 
At time $t=0$, all the investments are made, and for each bank $i$ they are represented as a balance sheet consisting of interbank assets $A^b_i = \sum_j A^b_{ij}$, interbank liabilities $L^b_i = \sum_j L^b_{ij} = \sum_j {A^b_{ij}}^T$, external assets $A^e_i = \sum_j A^e_{ij}$, and external liabilities $L^e_i$. 
In general, the difference between the assets and liabilities, the equity $E_i(t)$, is defined at time $t$ as:
\begin{align}
    E_i(t) &= A^e_{i}(t) + A^b_{i}(t) - L^e_i(t) - L^b_{i}(t) \nonumber \\
    &= {\Lambda}^e \sum_k E_{ik} x^e_k(t) + {\Lambda}^b \sum_j B_{ij} x^b_j(t) - L^e_i - L^b_i
\label{eqn:balance_sheet}
\end{align}
The decomposition of external and interbank assets bears resemblance. Both the external and interbank leverage, ${\Lambda}^e_i = A^e_i/E_i$ and ${\Lambda}^b_i = A^b_i/E_i$ are set to be the same for every bank: ${\Lambda}^e_i = {\Lambda}^e$,  ${\Lambda}^b_i = {\Lambda}^b$. The matrices $E_{ik}$ and $B_{ij}$ represent  the structure of the external investments of the banks and  
the adjacency matrix of the underlying interbank network, respectively. The elements of $E_{ik}$ and $B_{ij}$ are normalized as fractions of the respective total assets. Unitary values of investments $x^e_k(t)$ and $x^b_j(t)$ are initially equal to $1$. 


First, we explain the time dependence of the unitary values of interbank investments $x^b_j(t)$. We define the indicator variable ${\chi}_i(t)$, that indicates the state of default from the equity level: ${\chi}_j(t) = \mathbbm{1}_{E_j(t) < 0}$. The values of interbank investments depend on the counterparty's default state, 
and can, accordingly, take two possible values $x^b_i({\chi}_i = 0) = 1$ and  $x^b_i({\chi}_i = 1) = \delta \cdot R_i$. In the case the counterparty has not defaulted yet, the indicator variable stays equal to $1$ as initially set, and in the case of default, it gets reduced to $\delta \cdot R_i \in [0,1\rangle$. The recovery rate is represented as the product of an exogenous recovery rate $\delta \in [0,1\rangle$ and an endogenous recovery rate $R_i$, that is defined as:
\begin{equation}
    R_i = \frac{\min \Big\{\max \Big\{0, A^e_{i}(t) + A^b_{i}(t) - L^e_i(t)\Big\}, L_i^b(t) \Big\}}{L^b_{i}(t)}
\end{equation}
$R_i$ equals the fraction of the interbank liabilities the defaulted bank can still repay with its remaining assets, after having repaid its external liabilities. The exogenous recovery rate $\delta$ represents frictions in the process of defaults such as legal costs. In this paper we focus on the worst case scenario, when the exogenous recovery rate $\delta$ is equal to 0.

For the purpose of this paper, we disregard the structure of the external asset investments, and represent them simply as $A^e_i(t)$. We have then as the balance sheet equation:
\begin{align}
    E_i(t) &= A^e_i(t) + {\Lambda}^b \sum_j B_{ij} x^b_j(t) - L^b_i - L^e_i
\end{align}
Since we set the total interbank assets and liabilities to be the same for every bank $i$, they also have to be equal to each other in amount, due to the relationship $L^b_{ij} = {A^b_{ij}}^T$. 

For the purpose of this analysis we introduce some homogeneity among the financial agents that makes the treatment more tractable. Heterogeneity could be introduced at a later stage. 
We represent the shock propagation process as occurring in successive steps denoted by $t= 0, \, 1, ...$. We set the initial value of external assets $A^e_i(t=0)$ equal to one, and the external liabilities $L^e_i(t=0)$ equal to zero. This implies that the initial total equity is equal to one, $E_i(t=0) = 1$, for each bank $i$. After the process of shock propagation has started, a bank $i$ is considered to default when the equity value goes below zero, $E_i < 0$. 

At time step $t=1$, the external assets of bank $i$ incur a shock $s_i$ and take the value $A^e_i(0)(1+s_i)$.
The computation of the equity and the determination of defaults proceeds as follows. 
After the shock is first applied to the bank's external assets, the value of total assets is updated (as the sum of interbank assets and the remaining external assets). Then the value of banks' total assets is compared to banks' total liabilities to check whether any bank has defaulted. 
In case of default of a bank $k$, the value of its remaining assets is calculated and allocated to each of $k$'s counterparties proportionally to its interbank liabilities to each counterparty. Moreover, the value is also multiplied by the exogenous rate of recovery, $\delta$. The new value of the interbank assets is thus equal to ${\Lambda}^b \sum_j B_{kj} \delta \cdot R_j$, where the counterparties of a bank $k$ are denoted by $j$.
Because banks' interbank assets have changed some new default may occur. The process of adjusting the value of interbank assets and checking for default continues until no new defaults occur. 

The process of the propagation of the shock through the network can be written in terms of the following map of the vector of levels of equity on itself, component-wise: 
\begin{equation}
    E_i(t+1) = A^e_i(0)(1+s_i) + {\Lambda}^b \sum_j B_{ij} x^b_j(\mathbbm{1}_{E_j(t) < 0}) - L^b_i, \quad x^b_j(0)=1, \quad x^b_j(1)=\delta \cdot R_i.
\end{equation}
It is possible to show that the map is non decreasing and hence the fixed point denoted with 
$E_i^* = E_i(T)$ is unique \citep{barucca2020network}.
The propagation is equivalent to a process presented in \cite{veraart2013failure}, with the constraint $\alpha = \beta$, and $\beta$ being equivalent to our $\delta$.
According to the notation in \cite{eisenberg2001systemic}, we can define the financial system as a triplet $(\mathbf{A^b}$, $L^b$, $\overline{A^e}$) with the addition of the exogenous recovery rate $\delta$. $\mathbf{A^b}$ is a matrix representing the network of interbank assets, $L^b$ is a vector of interbank liabilities of each bank, and $\overline{A^e}$ is a vector defined as $\overline{A^e} = A^e - L^e$, which represents the net amount of assets of each bank that are external to the interbank network.
As we will show in the paragraph below, shock vectors are modelled as a random variable, hence the equilibrium number of banks in defaults is also a dependent random variable. We are interested in the distribution of equilibrium states. 

\subsection{Modelling shocks and their correlations}
For a set of N banks, we sample a shock vector $\vec{s}$ of dimension $N$, where each component $s_i$ represents the shock on bank $i$.
Our aim here is to retain tractability while capturing in the simplest way the fact that banks experience shocks of different magnitude. Accordingly, shocks take on values out of a discrete set : $s_i = \{ {\sigma}_{\mu} \}$ with $\mu=1,...,n_{\sigma}$, each with probability $p_{\mu}$. 
The values ${\sigma}_{\mu}$ are set as follows. 
The value of ${\sigma}_{1}$ is large enough to cause the immediate default of the affected bank, i.e. ${\sigma}_{1} < -1$. The values shocks ${\sigma}_{\mu}, \mu=2,...,n_{\sigma}$ correspond to levels of losses on the banks' equity that do not cause immediate default but weaken the bank financial situation, i.e. ${\sigma}_{\mu} \in [-1,0 \rangle$. The last element is set to zero, ${\sigma}_{n_{\sigma}} = 0$, and it represents the case of no loss on equity. In this paper, we focus on the simplest case, $n_{\sigma} = 3$, with three different types of shocks: immediate default, partial loss on equity, and no loss. We group the affected vertices into compartment $\mu$, depending on the shock ${\sigma}_{\mu}$ they received, and denote the number of vertices in compartment $\mu$ as $N_{\mu}$.

We define the random variable of the shock value as  $\mathcal{S}$, which realizes shock values $\sigma_\mu$ with the previously defined respective probabilities $p_{\mu}$. The cumulative distribution function of random variable $\mathcal{S}$ will have the form:
\begin{equation}
    F_\mathcal{S}(\mathcal{S}=\sigma_x) = \sum_{\mu=0}^{x} p_{\mu}, \quad x = {\sigma}_1,{\sigma}_2,\dotsc,{\sigma}_{n_{\sigma}}
    \label{eqn:shock_sample}
\end{equation}

The sampling of values of the shock random variable $S$ is then performed using the inverse of this CDF on uniform random variables, $F_\mathcal{S}^{-1}(U)$.

A number of examples from the financial systems point to the existence of correlation between the external shocks. Companies that belong to the same industry, or the same geographic region will be affected by similar exogenous events. In some time periods, adverse general economic conditions can cause a higher correlation of defaults \citep{sandoval2012correlation,hull2012options}. Therefore, to obtain a more realistic picture of the shock structure, we need to introduce correlations on them by correlating the uniform random variables. For that purpose, we use the copula \citep{sklar1959fonctions,schmidt2007copula}, a function that isolates the dependency structure of a multivariate distribution. In particular, out of the several families of copulas, we choose the Gaussian copula for this analysis, since it is analytically most tractable. In addition to that, for a single factor Gaussian copula \citep{krupskii2013factor}, i.e. a copula that has all the correlations set to the same value, the Gaussian random variable can be decomposed into a common and an idiosyncratic random variable, that are mutually independent, which is a property we will use in the Appendix C to derive the expression we need.
The Gaussian copula is represented as a cumulative distribution function of the multivariate Gaussian distribution $F_{\textbf{Z}}$, with marginal distributions transformed to the uniform distribution using the univariate Gaussian cumulative distribution functions $F_{Z_i}$:
\begin{equation}
    C_{\Sigma}^{Ga}(u_1,u_2,\dotsc,u_N) = F_{\textbf{Z}}(F_{Z_1}^{-1}(u_1),F_{Z_2}^{-1}(u_2),\dotsc,F_{Z_N}^{-1}(u_N))
\end{equation}

The sampling of the uniform variables all correlated with $\rho$ then goes in the following way: we take an n-variate Gaussian probability density function $f_{\textbf{Z}}$, with the covariance matrix $\Sigma$ that has diagonal terms equal to $1$ and all off diagonal terms equal to the same chosen level of correlation, ${\rho}_{ij} = \rho$.
We sample an n-dimensional vector $\vec{z} = (z_1,\dotsc,z_N)$ from the multivariate Gaussian distribution, and transform it into an n-dimensional sample $\vec{u} = (u_1,\dotsc,u_N)$ from the uniform distribution, using the CDF of the Gaussian univariate distribution. 

The sample of uniform random variables that we obtained is now correlated, and we can proceed in transforming it to a sample of shock values $\vec{s} = (s_1,\dotsc,s_N)$, using the inverse of a discrete CDF $F_S^{-1}(U)$.

\subsection{Network topology}
In order to model the propagation of the shocks we defined in the previous section, we employ the random k-regular graph. This topology introduces randomness in the pairs of vertices, while keeping the vertex degree fixed. Under our constraints for the equity and the interbank leverage, randomness in the vertex degree would greatly extend the time needed for the simulations.
In addition to reducing the length of the simulations, a simple network topology enables us to have more control over the parameters. 

Equal interbank leverage is assigned to each bank ${\Lambda}_b$ and divided over its interbank assets. The condition that the equities and interbank leverage are uniform over all the banks ($E_i = E$, ${\Lambda}^b_i = {\Lambda}^b$) translates into all the assets and liabilities having the same value, $A^b_i = A^b$, $L^b_i = L^b$. Since the assets and liabilities can be represented as a network, from $L^b_{ij} = A^b_{ij}$ we recover $A^b = L^b$. 

On a random k-regular network, where all the vertices have the same degree $k$, these constraints lead to a solution where all the individual assets and liabilities are equal in value, $A^b_{ij} = L^b_{ij} = \frac{A^b}{k/2}$. In comparison, using an \ER{} graph would require iteratively looking for a solution for $A^b_{ij}$ and $L^b_{ij}$, which is not guaranteed to exist for every instance of an \ER{} graph. 

The triplet representing the financial system  ($\mathbf{A}^b$, $L^b$, $\overline{A^e}$) in this case consists of a matrix with values $\frac{{\Lambda}^b}{k/2}$ everywhere and zeros on the diagonal, a vector composed of all values equal to ${\Lambda}^b$ and a vector with all values equal to 1.

\subsection{Process simulation}
In the first step, the external shock is applied to the external assets. If it triggers a bank to default, the default is further propagated with the clearing algorithm described in Model description, until the condition, that all the equity changes are less than $3 \%$ of the original equity, is fulfilled. In case the condition cannot be fulfilled, there is a cutoff set after $c$ steps. The exogenous recovery rate is set to $\delta = 0$.

Shocks are sampled 1000 times and imposed on the external assets of the vertices in a network. For every network degree $k$, 5 different instances of the random network are generated. On every network instance, 1000 process realizations $\xi$ are started using the sample of shocks, resulting in a total of $N_{\xi} = 5000$ process realizations. For each realization a vector of default indicators for banks is obtained. The fraction of defaulted vertices within the realization $\xi$, $q(\xi)$, is calculated as the expected value of the default indicator vector.
We then obtain a distribution of default fractions from entire sample with the same network degree $k$, and use its expected value $\langle q \rangle = \frac{1}{N_{\xi}} \sum_{\xi} q(\xi)$ as an estimation for the probability of default of a counterparty $q$, computed taking into account the whole network.

\subsection{Threshold model}
It has been shown in \cite{gai2010contagion,KOBAYASHI2014model}, that the financial contagion model, based on balance sheets and with variable external assets, is equivalent to the threshold model of \cite{watts2002threshold}. 
We use the mean field approximation to obtain the solution for the threshold model. For both the correlated and uncorrelated case, we obtain the distribution of the banks shocked by each value of shock ${\sigma}_{\mu}$, and use it to calculate the expected fraction of defaults $\langle q \rangle$, in addition to obtaining it from the simulations. The analytical approach enables us to study the behaviour of the expected fraction of defaults in the infinite system size limit.

\subsection{CVNA vs. CVDA}
We denote the standard CVA measure as CVDA (Credit Valuation Direct-Adjusted) to stress that the valuation adjustment is made taking into account only the probability of default of the direct counterparty. Therefore, we can state that it is proportional to the exogenous probability of default of a counterparty $p_1$, $CVDA \propto p_1$.
On the other hand, we introduce the CVNA (Credit Valuation Network-Adjusted) to introduce the network effects to the probability of default of the counterparty into the valuation of the contract. CVNA is then proportional to the above defined probability of default of a counterparty that takes
into account the whole network, $\langle q \rangle$, $CVDA \propto \langle q \rangle$.
If we consider all else that contributes to both adjustments to be the same, we can write the following relation:
\begin{equation}
    CVNA = CVDA \cdot \frac{\langle q \rangle}{p_1}
\label{eqn:CVNA-CVDA}
\end{equation}
which states that CVNA will significantly differ from CVDA in cases when $\langle q \rangle$ is a result of a significant amplification of $p_1$.


\section{Results}

\subsection{Threshold model - a mean field approximation}

The mapping between the financial contagion and the threshold model of \citet{watts2002threshold} has been shown by \cite{KOBAYASHI2014model}. We will shortly explain the reasoning behind this. 
The threshold model is set up as follows; every agent in the system that is at some state 0, observes the binary states of its $k$ neighbouring agents, and then changes its states according to a simple rule; if a threshold fraction $\varphi$ of the total of $k$ neighbours is in state 1, it switches to state 1, else it remains in state 0. In the case of financial contagion, the two states of interest are default and non-default, and the thresholds can be easily calculated considering the following.

Since our default condition is $E_i < 0$, bank $i$ defaults either by being hit with the initial default shock $p_1$, or by having a large enough number of neighbours that default.
For a central vertex with the level of equity after the shock equal to ${\epsilon}_{\mu}$, the required number of neighbours $m_{\mu}$, that need to default to cause its default, is:
\begin{equation}
    m_{\mu} = \left \lceil \frac{k}{2} \frac{{\epsilon}_{\mu}}{{\Lambda}_b} \right \rceil \leq \frac{k}{2}
\end{equation}
The number of neighbours $m_{\mu}$ then represents the threshold from the threshold model described above. 

In our setup, the external shocks, sampled from $\vec{\sigma} = ({\sigma}_1,{\sigma}_2,{\sigma}_3)$ and delivered to the system, are then simply reflected in the initial distribution of different thresholds of equity values $\vec{\epsilon} = ({\epsilon}_1,{\epsilon}_2,{\epsilon}_3)$. We group the vertices depending on the type of the shock they received into compartments $\vec{N} = (N_1,N_2,N_3)$.
The underlying network is assumed to be a random regular network, in which every node has the same degree $k$, which is the sum of in- and out-degrees $k = k_{in} + k_{out}, k_{in}=k_{out}$.

\begin{prop}
The equilibrium value of the default probability in the financial system defined earlier is given by:
\begin{equation}
    \langle q \rangle = \sum_{N_1,N_2,N_3} \binom{N}{N_1,N_2,N_3}\quad p_1^{N_1} \cdot p_2^{N_2} \cdot p_3^{N_3} \cdot q(\vec{N})
\label{eqn:q_nocorr}
\end{equation}
for the case with no correlation on shocks. When correlations exist on the shocks \citep{krupskii2013factor}, in the manner defined earlier, the equilibrium value follows as:
\begin{equation}
    \langle q \rangle = \int_{-\infty}^{\infty} f_{X(\rho)}(\alpha) \left( \sum_{N_1,N_2,N_3} \binom{N}{N_0, N_1, N_2}  {\pi}_1(\alpha)^{N_1} {\pi}_2(\alpha)^{N_2} {\pi}_3(\alpha)^{N_3} \right) \times q(\vec{N})\,d\alpha  
\label{eqn:q_corr}
\end{equation}
with ${\pi}_1(\alpha), {\pi}_2(\alpha), {\pi}_3(\alpha)$ as:
\begin{align}
 & {\pi}_1(\alpha) \coloneqq F_{Y}(z_I - \alpha) = F_{Y}(F_Z^{-1}(p_I) - \alpha) \\ \nonumber
    & {\pi}_2(\alpha) \coloneqq F_{Y}(z_{II} - \alpha) - F_{Y}(z_I - \alpha) = F_{Y}(F_Z^{-1}(p_{II}) - \alpha) - F_{Y}(F_Z^{-1}(p_I) - \alpha) \\ \nonumber
    & {\pi}_3(\alpha) \coloneqq 1 - F_{Y}(z_{II} - \alpha) = 1 - F_{Y}(F_Z^{-1}(p_{II}) - \alpha) \\ 
\end{align}
and $p_I \coloneqq p_1$ and $p_{II} \coloneqq p_1 + p_2$, $z_I \coloneqq F_Z^{-1}(p_I)$, $z_{II} \coloneqq F_Z^{-1}(p_{II})$, where $F_Z^{-1}$ and $F_Y^{-1}$ are inverse cumulative distribution functions of random variables $Z$ and $Y$. $Z_i$ is a normal random variable $Z_i \sim \mathcal{N}(0,1)$ and $Y_i$ comes from the decomposition $Z_i = X + Y_i, \quad Z_i \sim \mathcal{N}(0,1), \quad  X \sim \mathcal{N}(0,\rho), \quad Y_i \sim \mathcal{N}(0,1-\rho)$.
\end{prop}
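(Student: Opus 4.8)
The plan is to derive both equations \eqref{eqn:q_nocorr} and \eqref{eqn:q_corr} as instances of the law of total expectation, conditioning on the compartment vector $\vec{N}=(N_1,N_2,N_3)$ and, in the correlated case, additionally on the common factor of the Gaussian copula. The conceptual weight of the argument does not lie in either averaging step, which is routine, but in a preliminary reduction: one must first argue that the equilibrium default fraction depends on the shock realization \emph{only} through the counts $\vec{N}$, so that it is legitimately a deterministic function $q(\vec{N})$. I expect this reduction to be the main obstacle, since for a generic network the outcome of a cascade depends on \emph{which} vertices are shocked and how they sit relative to one another, not merely on how many fall in each compartment.

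First I would establish the reduction to $q(\vec{N})$. Under the homogeneity assumptions (a random $k$-regular graph with equal equity and interbank leverage, giving $A^b_{ij}=L^b_{ij}=A^b/(k/2)$), a vertex in compartment $\mu$ has post-shock equity threshold $\epsilon_\mu$ and defaults once $m_\mu=\lceil (k/2)(\epsilon_\mu/\Lambda_b)\rceil$ of its neighbours default. Invoking the mean-field equivalence with the Watts threshold model \citep{gai2010contagion,KOBAYASHI2014model} together with the monotonicity and uniqueness of the clearing fixed point \citep{barucca2020network}, the stationary default fraction becomes a function of the compartment \emph{fractions} alone: in the large-$N$ mean-field limit the placement of shocked vertices on the regular graph is irrelevant by exchangeability of the ensemble. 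This is precisely the step where the approximation enters, and I would state carefully that $q(\vec{N})$ denotes the mean-field equilibrium value conditional on $\vec{N}$.

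Given the reduction, the uncorrelated case is immediate. Since the shocks are sampled i.i.d.\ through $F_{\mathcal{S}}^{-1}$, each vertex lands in compartment $\mu$ independently with probability $p_\mu$, so $\vec{N}$ is multinomially distributed with parameters $(N;p_1,p_2,p_3)$. Writing $\langle q\rangle=\sum_{\vec{N}}\Pr(\vec{N})\,q(\vec{N})$ and substituting the multinomial mass function yields \eqref{eqn:q_nocorr}.

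For the correlated case I would exploit the single-factor decomposition $Z_i=X+Y_i$ with $X\sim\mathcal{N}(0,\rho)$ and $Y_i\sim\mathcal{N}(0,1-\rho)$ mutually independent \citep{krupskii2013factor}, and condition on $X=\alpha$. Because the $Y_i$ are independent, the $Z_i$ and hence the compartment assignments are \emph{conditionally} independent given $X=\alpha$. A vertex receives the default shock iff $Z_i<z_I$, i.e.\ $Y_i<z_I-\alpha$, so its conditional probability of being in compartment $1$ is $F_Y(z_I-\alpha)=\pi_1(\alpha)$; the analogous interval computations on $[z_I,z_{II})$ and $[z_{II},\infty)$ produce $\pi_2(\alpha)$ and $\pi_3(\alpha)$. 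Thus, conditional on $X=\alpha$, the vector $\vec{N}$ is multinomial with parameters $(N;\pi_1(\alpha),\pi_2(\alpha),\pi_3(\alpha))$, and the bracketed sum in \eqref{eqn:q_corr} is exactly $\mathbb{E}[q\mid X=\alpha]$. Integrating this against the density $f_{X(\rho)}$ of the common factor gives \eqref{eqn:q_corr}, completing the argument.
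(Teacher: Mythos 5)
Your proposal is correct and follows essentially the same route as the paper: define $q(\vec{N})$ as the mean-field fixed point of the binomial threshold iteration, then average over the multinomial distribution of $\vec{N}$ in the uncorrelated case, and in the correlated case use the single-factor decomposition $Z_i = X + Y_i$, condition on $X=\alpha$ to recover conditional independence and a conditional multinomial with probabilities $\pi_\mu(\alpha)$, and integrate against $f_X$ — which is exactly the derivation in the paper's Appendix C. Your explicit flagging of the reduction to $q(\vec{N})$ as the locus of the mean-field approximation is a slightly more careful framing of a step the paper takes implicitly, but it is not a different argument.
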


\begin{proof}
We can write an iterative equation for the expected fraction of default $q_n$ using the mean field approximation, which assumes the homogeneity of the system, and the lack of all correlations between the variables. Since we consider the system to be homogeneous, we track the average default state of a central vertex, $q_n$. The initial value of the expected fraction of default $q_0$ is equal to the exogenous probability of default $p_1$. Assuming the lack of correlations between the neighbours allows us to use the binomial distribution to calculate the probability that at least $m_{\mu}$ neighbours defaulted for a central vertex that was shocked with ${\sigma}_{\mu}$.  Summing over all possibilities for shocks $\mu$, we get the iterative expression: 
\begin{equation}
 q_n = \sum_{\mu} \frac{N_{\mu}}{N} \sum_{m=m_{\mu}}^{k/2} \binom{k/2}{m} q_{n-1}^{m} (1 - q_{n-1})^{k/2-m} = f(q_{n-1})
\label{eqn:iter}
\end{equation}
The solution to this equation can be obtained by looking for a fixed point for $q^* = f(q^*) = \lim_{n \to \infty}f(q_n)$.

For any possible realization of the vector of vertices $\vec{N}$ resulting from some shock, if we know the degree of the network $k$, the interbank leverage ${\Lambda}_b$ and the equity sizes after the shock, $\vec{\epsilon} = \vec{1} - \vec{\sigma}$, the fixed point solution provides us the expected fraction of defaulted vertices:
\begin{equation}
    q(\Vec{N}) = \sum_{\mu} {\Pi}_{\mu}(N_{\mu}) \left[1 - F_{CDF}^{BINOM}\left(m_{\mu},\frac{k}{2},q(\vec{N}) \right)\right], \quad {\Pi}_{\mu} = \frac{N_{\mu}}{N}, \quad m_{\mu} = \left \lceil{\frac{k}{2} \frac{{\epsilon}_{\mu}}{{\Lambda}_b}}\right \rceil \leq \frac{k}{2}
    \label{eqn:q_N}
\end{equation}
To calculate the expected fraction of defaulted vertices $\langle q \rangle$ over all the possible shock realizations, i.e. represented in this calculation with all the possible values of the vector $\vec{N}$, we need the probability distribution $\mathcal{P}(\vec{N};N;\vec{p})$ of $\vec{N}$, given the individual initial shock probabilities $\vec{p} = (p_1,p_2,p_3)$:
\begin{equation}
    \langle q \rangle = \sum_{N_1,N_2,N_3} \mathcal{P}(\vec{N};N;\vec{p}) \cdot q(\vec{N})
\end{equation}
The probability distribution in question for the non-correlated case is simply the multinomial distribution, so the final expression looks like Eqn \ref{eqn:q_nocorr}.
In the Appendix C we provide the derivation of the probability distribution for the correlated case (Eqn \ref{eqn:p_corr}), which we plug into the previous expression to get Eqn \ref{eqn:q_corr}.
\end{proof}
\medskip

To shorten the time necessary for the numerical evaluation of Eq. \ref{eqn:q_corr}, we approximate the varying multinomial distribution by using only the expected values, $\mathbb{E} [N_{\mu}(\alpha)] = N {\pi}_{\mu}(\alpha)$ which then replace the factor ${\Pi}_{\mu} = N_{\mu}/N$ with $\mathbb{E} [{\Pi}_{\mu}] = \mathbb{E}[N_{\mu}(\alpha)]/N = {\pi}_{\mu}(\alpha)$ in the Equation \ref{eqn:q_N}:
\begin{equation}
    q(\alpha) = \sum_{\mu} {\pi}_{\mu}(\alpha) \left[1 - F_{CDF}^{BINOM}\left(m_{\mu},\frac{k}{2},q(\alpha) \right)\right], m_{\mu} = \left \lceil{\frac{k}{2} \frac{{\epsilon}_{\mu}}{{\Lambda}_b}}\right \rceil \leq \frac{k}{2}
\label{eqn:q_alpha}
\end{equation}
The expected fraction of defaults then becomes:
\begin{equation}
    \langle q \rangle = \int_{-\infty}^{\infty} f_X(\alpha) \cdot q(\alpha) \,d\alpha
\label{eqn:q_corr_s}
\end{equation}

We show the comparison of the Eqn. \ref{eqn:q_nocorr} with the simulation results in Fig. \ref{fig:nocorr}. The comparison of the result of Eqn. \ref{eqn:q_corr_s} and the simulation results is depicted in Fig. \ref{fig:corr}. We can see from the Figure \ref{fig:theory} that, in case when there are no correlations, CVNA ($\sim \langle q \rangle$) has a higher value compared to CVDA ($\sim p_1$), for intermediate leverages in the lower range of the network degrees $k$, or high enough leverages for all $k$. In case with correlations, CVNA is larger than CVDA for any network degree $k$ if the leverage is high enough.

\begin{figure*}[!htbp]
\begin{subfigure}{.5\textwidth}
  \centering
  \includegraphics[width=1\linewidth]{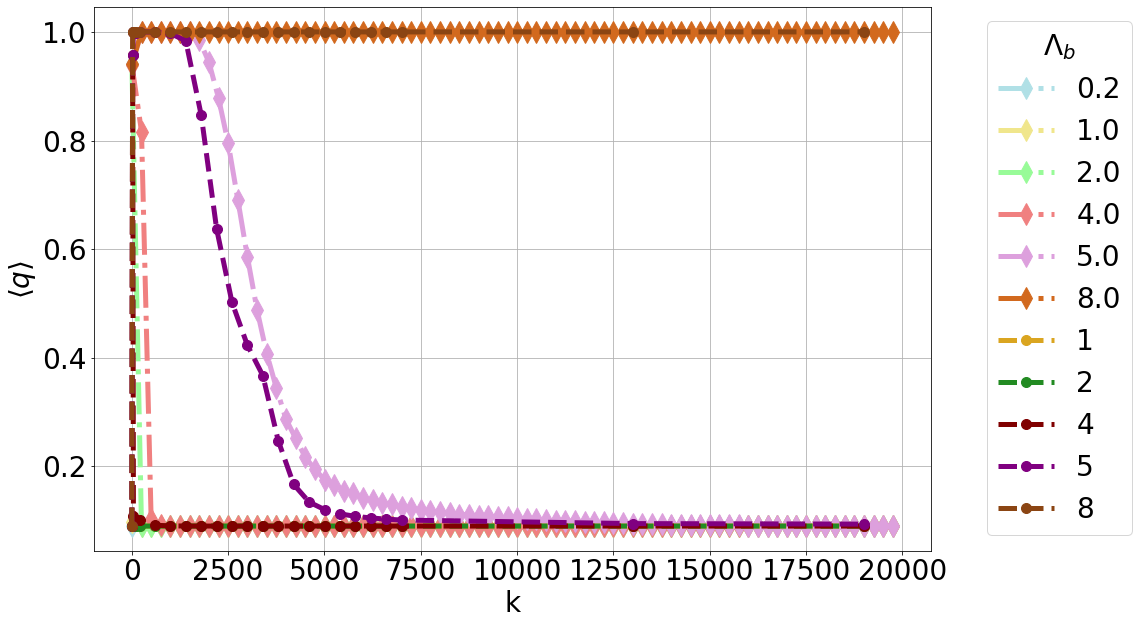} 
  \caption{\textbf{$N = 10000$,$\rho = 0$}}
  \label{fig:nocorr}
\end{subfigure}%
\begin{subfigure}{.5\textwidth}
  \centering
  \includegraphics[width=1\linewidth]{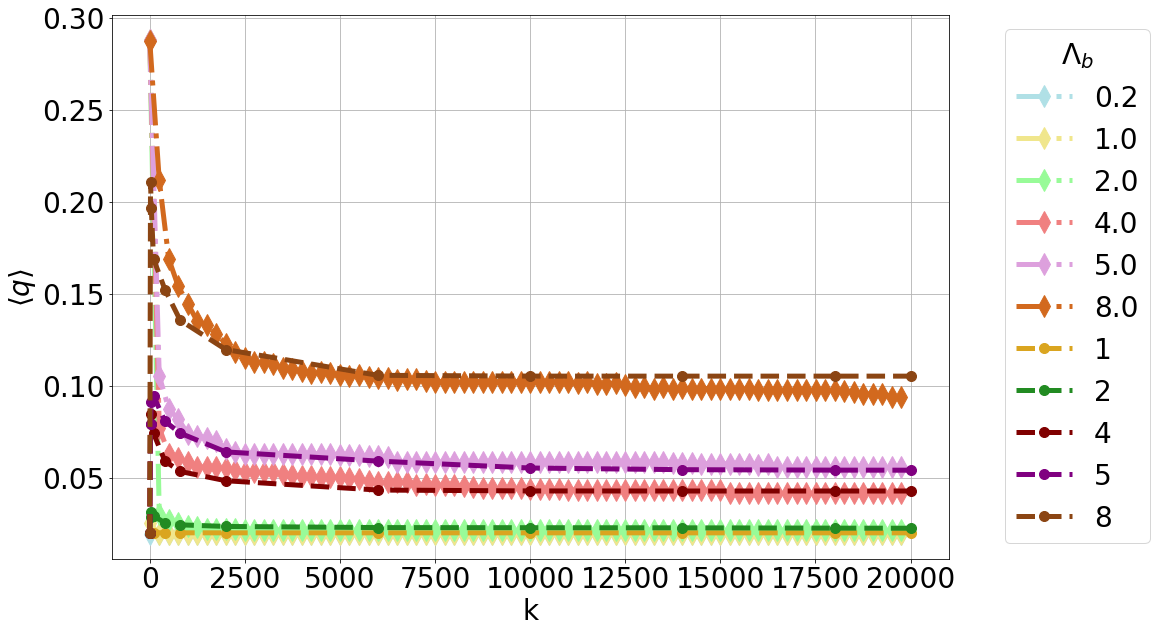} 
  \caption{\textbf{$N = 10000$, $\rho = 0.3$}}
  \label{fig:corr}
\end{subfigure}
\caption{\textbf{Comparison of the theoretical calculation and the simulation results.} Both panels show results on the network of size $N = 10000$. Panel \textbf{a)} shows the probabilities of default $\langle q \rangle$ depending on the network degree $k$ for the shock with parameters $\vec{p} = (0.09,0.083,0.827)$, $\vec{\sigma} = (-1.1,-0.5,0.0)$ and no correlation between shocks, $\rho = 0$.
Panel \textbf{b)} shows the probabilities of default $\langle q \rangle$ depending on the network degree $k$ for the shock with parameters $\vec{p} = (0.02,0.09,0.89)$, $\vec{\sigma} = (-1.1,-0.5,0.0)$ and with correlation $\rho = 0.3$ between shocks. }
\label{fig:theory}
\end{figure*}

Prop. \ref{prop:q_infinite_network}
provides the expression for the average default probability in the case of an infinite network, in presence of correlation.

\begin{prop}\label{prop:q_infinite_network}
In the case of an infinite network, the limiting equilibrium value of the default probability for the case without correlation on shocks is given by:
\begin{equation}
    \langle q_{lim} \rangle =  p_1 + \sum_{\mu=2} p_{\mu} \cdot \prod_{\phi = 2}^{\mu} \Theta\left(\sum_{\nu=1}^{\phi-1} p_{\nu} - \frac{{\epsilon}_{\phi}}{{\Lambda}_b}\right)
    \label{eqn:lim_nocorr}
\end{equation}
When correlations are present, the limiting equilibrium value follows as:
\begin{align}
    \langle q_{lim} \rangle &= \int_{-\infty}^{\infty} f_X(\alpha) {\pi}_1(\alpha)\,d\alpha + \int_{-\infty}^{l_{II}} f_X(\alpha) {\pi}_2(\alpha)\,d\alpha + \int_{-\infty}^{\min{(l_{II}, l_{III} )}} f_X(\alpha) {\pi}_3(\alpha)\,d\alpha, \nonumber \\ 
    \quad l_{II} &= F_Z^{-1}(p_I) - F_Y^{-1}\left( \frac{{\epsilon}_2}{{\Lambda}_b} \right),
    \quad l_{III} = F_Z^{-1}(p_{II}) - F_Y^{-1}\left( \frac{{\epsilon}_3}{{\Lambda}_b} \right)
    \label{eqn:lim_corr}
\end{align}
The minimum value of $\langle q_{lim} \rangle$ is still $p_1$, like in the uncorrelated case, and it comes from the first integral in Equation \ref{eqn:lim_corr}. The value of other two integrals is always positive and defined by the limits $l_{II}$ and $l_{III}$.
\end{prop}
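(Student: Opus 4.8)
The plan is to obtain both formulas as the large–network limit of the mean–field fixed point. I would start from the per–realization fixed point \ref{eqn:q_N} (for the uncorrelated case) and its conditional analogue, letting $N\to\infty$ so that the empirical compartment fractions concentrate on their means, and then letting the degree $k\to\infty$ so that the binomial survival function degenerates into a Heaviside step, after which the surviving equation is a deterministic cascade that I solve by tracking the monotone iteration.

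Concretely, for the uncorrelated case I would first note that as $N\to\infty$ the law of large numbers for the multinomial gives $\Pi_\mu = N_\mu/N \to p_\mu$, so the prefactors in \ref{eqn:q_N} become $p_\mu$. Next, since $m_\mu = \lceil (k/2)\,\epsilon_\mu/\Lambda_b\rceil$, the threshold fraction satisfies $m_\mu/(k/2)\to \epsilon_\mu/\Lambda_b$ as $k\to\infty$; a $\mathrm{Binom}(k/2,q)$ variable has mean $(k/2)q$ with relative fluctuations of order $k^{-1/2}$, so its survival probability $1-F_{CDF}^{BINOM}(m_\mu,k/2,q)$ converges to $1$ when $q>\epsilon_\mu/\Lambda_b$ and to $0$ when $q<\epsilon_\mu/\Lambda_b$, i.e. to $\Theta(q-\epsilon_\mu/\Lambda_b)$. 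The fixed–point map \ref{eqn:iter} thus reduces to $q=\sum_\mu p_\mu\,\Theta(q-\epsilon_\mu/\Lambda_b)$. I would then solve this as the limit of the monotone iteration started at $q_0=p_1$ (the exogenous default probability). Because the post–shock equities increase with the compartment index, $\epsilon_1<\epsilon_2<\epsilon_3$, the thresholds obey $\epsilon_1/\Lambda_b<0<\epsilon_2/\Lambda_b<\epsilon_3/\Lambda_b$: compartment $1$ always contributes $p_1$, and a higher compartment can tip only once every lower one already has. Tracking the cascade, compartment $2$ is absorbed exactly when the already–defaulted fraction $p_1$ exceeds $\epsilon_2/\Lambda_b$, and compartment $3$ only when that event holds \emph{and} the enlarged fraction $p_1+p_2$ exceeds $\epsilon_3/\Lambda_b$. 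Collecting these conditions produces the nested product $\prod_{\phi=2}^{\mu}\Theta(\sum_{\nu=1}^{\phi-1}p_\nu-\epsilon_\phi/\Lambda_b)$, which is exactly \ref{eqn:lim_nocorr}.

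For the correlated case I would condition on the common Gaussian factor $X=\alpha$. Given $X=\alpha$ the latent variables $Z_i=\alpha+Y_i$ are independent, and the conditional probability that a bank falls into compartment $\mu$ is precisely $\pi_\mu(\alpha)$, so repeating the $N,k\to\infty$ argument conditionally replaces every $p_\mu$ by $\pi_\mu(\alpha)$ and yields the conditional cascade
\[
q(\alpha)=\pi_1(\alpha)+\pi_2(\alpha)\,\Theta(\pi_1(\alpha)-\epsilon_2/\Lambda_b)+\pi_3(\alpha)\,\Theta(\pi_1(\alpha)-\epsilon_2/\Lambda_b)\,\Theta(\pi_1(\alpha)+\pi_2(\alpha)-\epsilon_3/\Lambda_b).
\]
Integrating $\langle q_{lim}\rangle=\int f_X(\alpha)\,q(\alpha)\,d\alpha$ as in \ref{eqn:q_corr_s} and using the identities $\pi_1(\alpha)=F_Y(z_I-\alpha)$ and $\pi_1(\alpha)+\pi_2(\alpha)=F_Y(z_{II}-\alpha)$, each Heaviside constraint turns, by monotonicity of $F_Y$, into an upper bound on $\alpha$: one has $\pi_1(\alpha)\ge\epsilon_2/\Lambda_b\iff \alpha\le z_I-F_Y^{-1}(\epsilon_2/\Lambda_b)=l_{II}$ and $\pi_1(\alpha)+\pi_2(\alpha)\ge\epsilon_3/\Lambda_b\iff \alpha\le z_{II}-F_Y^{-1}(\epsilon_3/\Lambda_b)=l_{III}$. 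This truncates the $\pi_2$ integral at $l_{II}$ and the $\pi_3$ integral at $\min(l_{II},l_{III})$, while the $\pi_1$ integral is untruncated and integrates to $p_1$, giving \ref{eqn:lim_corr} together with the stated lower bound $\langle q_{lim}\rangle\ge p_1$.

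The main obstacle I anticipate is the degeneration of the binomial survival function into the Heaviside step and the accompanying fixed–point selection. One must argue that the convergence to $\Theta(q-\epsilon_\mu/\Lambda_b)$ is uniform away from the thresholds, adopt a convention at the measure–zero coincidences $q=\epsilon_\mu/\Lambda_b$ (where $\Theta$ is ambiguous), and verify that the equilibrium relevant for the cascade is the smallest stable fixed point reached by iterating upward from $q_0=p_1$, so that the ordered cascade — rather than a larger spurious fixed point of $q=\sum_\mu p_\mu\Theta(q-\epsilon_\mu/\Lambda_b)$ — is the one selected. Once that is settled, the passage from the step–function constraints to the integration limits $l_{II}$ and $l_{III}$ is routine, resting only on the monotonicity of $F_Y$ and the explicit forms of $\pi_1,\pi_2,\pi_3$.
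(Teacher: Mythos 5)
Your proposal is correct and follows essentially the same route as the paper: concentration of the multinomial fractions as $N\to\infty$, degeneration of the binomial survival function into a Heaviside step as $k\to\infty$, and conditioning on the common Gaussian factor to convert each step constraint into an upper integration limit $l_{II}$, $l_{III}$ via monotonicity of $F_Y$. The one place you go beyond the paper is in explicitly deriving the nested product $\prod_{\phi=2}^{\mu}\Theta(\cdot)$ by tracking the monotone iteration from $q_0=p_1$ and selecting the smallest stable fixed point — the paper simply asserts that form — so your added care on fixed-point selection is a welcome refinement rather than a deviation.
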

\begin{proof}
We study the behaviour of the default probability in the limit of an infinitely large network ($N \to \infty$) that is complete ($k \to \infty$). 
To start, we consider the Equation \ref{eqn:q_N} in that light.
We wish to obtain the form of the cumulative distribution function of the binomial distribution in the limit of $k \to \infty$. We observe that the standard deviation grows with the square root of the in-degree $k/2$, $\sigma=\sqrt{k/2 \cdot q(1-q)}$, while the sample space grows linearly, $\sim k/2$. Therefore, if we look instead at a defaulted fraction of total in-neighbours $k/2$, its standard deviation behaves like $\sqrt{\frac{q(1-q)}{k/2}} \underset{k \to \infty}{\longrightarrow} 0$. Therefore, as the network degree goes to infinity, the probability distribution of the fraction of defaulted neighbours gets more and more localized around its expected value.
Thus, for a large enough $k/2$, we can approximate the CDF of a binomial function with the Heaviside function $\Theta(x)$, and we use the half maximum convention, $\Theta(0) = 1/2$:
\begin{equation}
    \left[1 - F_{CDF}^{BINOM}\left(m_{\mu},\frac{k}{2},q \right)\right] \approx \Theta\left(\frac{k}{2} \cdot q - \left\lceil \frac{k}{2} \frac{{\epsilon}_{\mu}}{{\Lambda}_b} \right \rceil\right) \approx \Theta\left(\frac{k}{2}\left(q - \frac{{\epsilon}_{\mu}}{{\Lambda}_b}\right)\right)
    \label{eqn:binom_approx}
\end{equation}

For the factor ${\Pi}_\mu$ in the Equation \ref{eqn:q_N}, setting $N \to \infty$ with similar arguments as above can be shown to reflect in ${\Pi}_\mu \to E({\Pi}_\mu) = E(N_\mu)/N = N p_\mu/N = p_\mu$.
Thus, the limiting value of the average default probability in the case of an infinite network, is simply given by Equation \ref{eqn:lim_nocorr}

In case the correlations are present, we insert the approximation from Equation \ref{eqn:binom_approx} into Equation \ref{eqn:q_corr}:
\begin{align}
    \langle q_{lim} \rangle &= \int_{-\infty}^{\infty} f_X(\alpha) \cdot \left[ {\pi}_1(\alpha) + \sum_{\mu=2} {\pi}_{\mu}(\alpha) \cdot \prod_{\phi = 2}^{\mu} \Theta\left(\sum_{\nu=1}^{\phi-1} {\pi}_{\nu}(\alpha) - \frac{{\epsilon}_{\phi}}{{\Lambda}_b}\right)\right] \,d\alpha  \\ \nonumber
    &= \int_{-\infty}^{\infty} f_X(\alpha) {\pi}_1(\alpha)\,d\alpha + \sum_{\mu=2} \int_{-\infty}^{\infty} f_X(\alpha) \cdot {\pi}_{\mu}(\alpha) \cdot \prod_{\phi = 2}^{\mu} \Theta\left(\sum_{\nu=1}^{\phi-1} {\pi}_{\nu}(\alpha) - \frac{{\epsilon}_{\phi}}{{\Lambda}_b}\right) \,d\alpha \\ \nonumber
    &= \int_{-\infty}^{\infty} f_X(\alpha) {\pi}_1(\alpha)\,d\alpha + \sum_{\mu=2} \int_{- \infty}^{l_{\mu}} f_X(\alpha) \cdot {\pi}_{\mu}(\alpha)\,d\alpha
\end{align}
The finite upper limits $l_{\mu}$ in the integrals result from the theta function. For the case of $\mu = 2$ we get from the condition:
\begin{align}
    {\pi}_1(\alpha) &> \frac{{\epsilon}_2}{{\Lambda}_b} \nonumber \\ 
    F_Y(F_Z^{-1}(p_I) - \alpha) &> \frac{{\epsilon}_2}{{\Lambda}_b} \\ \nonumber
    \alpha &< F_Z^{-1}(p_I) - F_Y^{-1}\left( \frac{{\epsilon}_2}{{\Lambda}_b} \right) \coloneqq l_{II}
\end{align}
The term for $\mu = 3$, consists of a product of two theta functions, with the first resulting in the limit $l_{II}$ and the second being the solution of ${\pi}_2(\alpha) + {\pi}_3(\alpha)>{\epsilon}_3/{\Lambda}_b$:
\begin{equation}
     \alpha < F_Z^{-1}(p_{II}) - F_Y^{-1}\left( \frac{{\epsilon}_3}{{\Lambda}_b} \right) \coloneqq l_{III}
\end{equation}
Due to the product of theta functions, the upper limit in the integral is determined by the smaller of the two values $l_{II}$ and $l_{III}$. The total expected fraction of default in the limit of an infinite network for $\mu = 1,2,3$ is represented by Equation \ref{eqn:lim_corr}
\end{proof}

To make the relation between CVNA and CVDA clearer, let us consider, for example, the case with with the following parameters: $\vec{p} = (0.02,0.09,0.89)$, $\vec{\sigma} = (-1.1,-0.75,0.0)$, ${\Lambda}_b = 8$, $\rho = 0.1$, on an infinite network. Banks' average default probability, after taking into account the network, equals to $\langle q_{lim} \rangle = 0.28$ and is thus more than 10 times larger than the external probability of default $p_1 = 0.02$, which, given Equation \ref{eqn:CVNA-CVDA}, simply translates into CVNA being more than 10 times larger than CVDA. 

\subsection{Finite probability of default is not system-size dependent}\label{s:size_effect}

\begin{figure*}[!htbp]
\begin{subfigure}{.5\textwidth}
  \centering
  \includegraphics[width=1\linewidth]{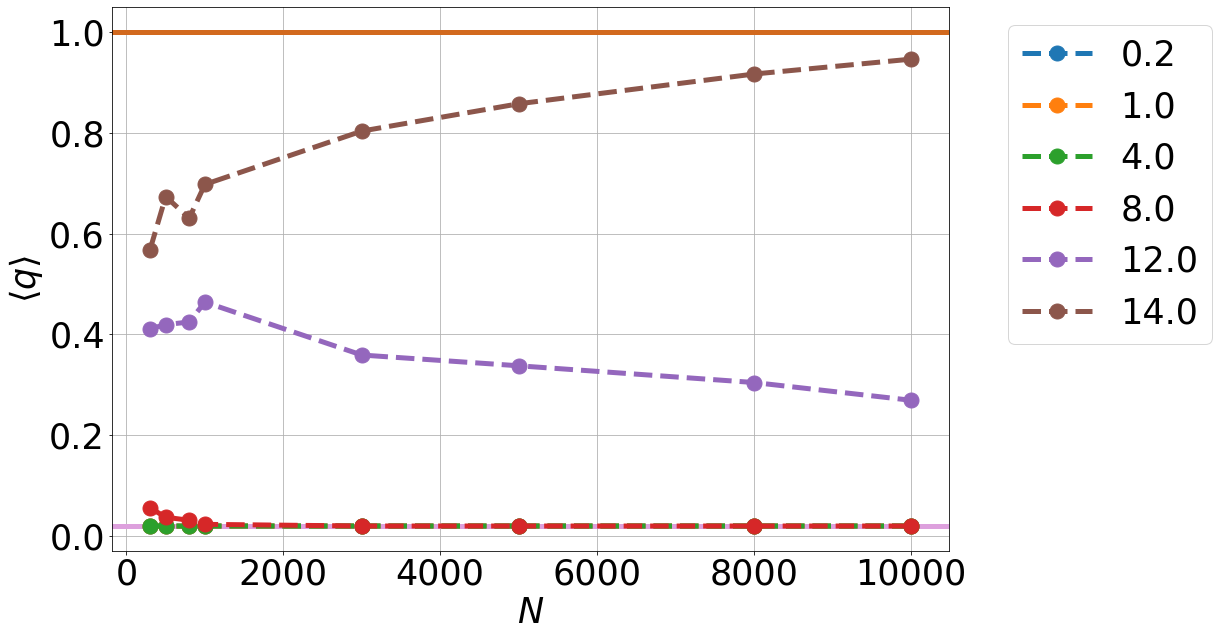}
  \caption{$\rho = 0.0$}
  \label{fig:system-first}
\end{subfigure}%
\begin{subfigure}{.5\textwidth}
  \centering
  \includegraphics[width=1\linewidth]{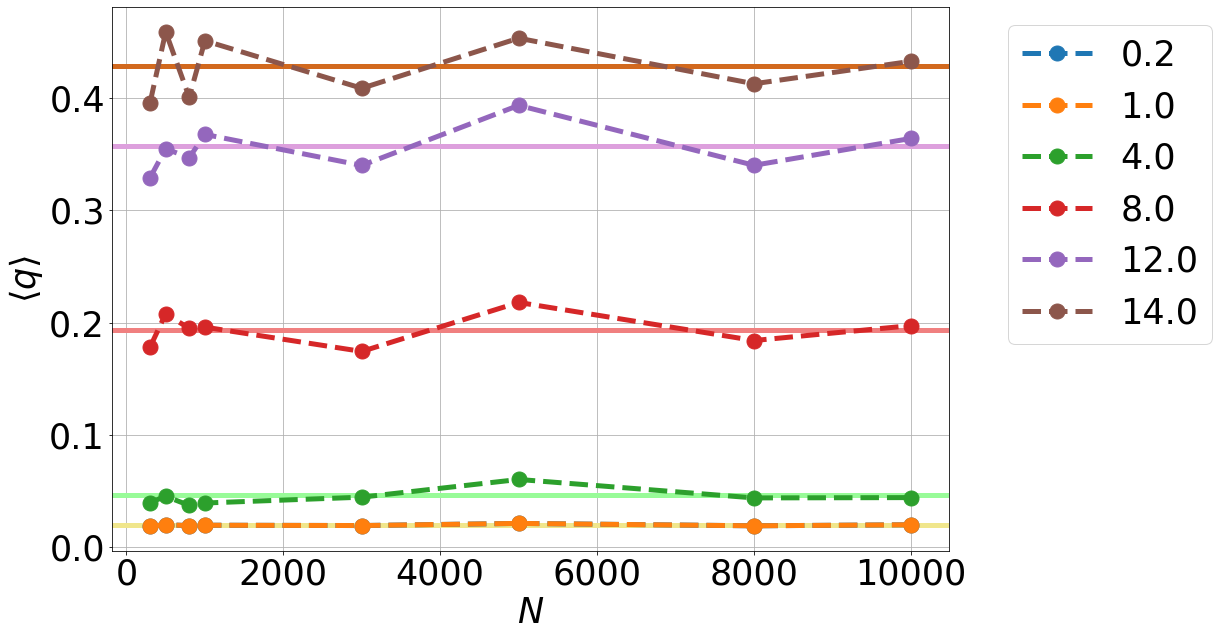}
  \caption{$\rho = 0.1$}
  \label{fig:system-second}
\end{subfigure}
\begin{subfigure}{.5\textwidth}
  \centering
  \includegraphics[width=1\linewidth]{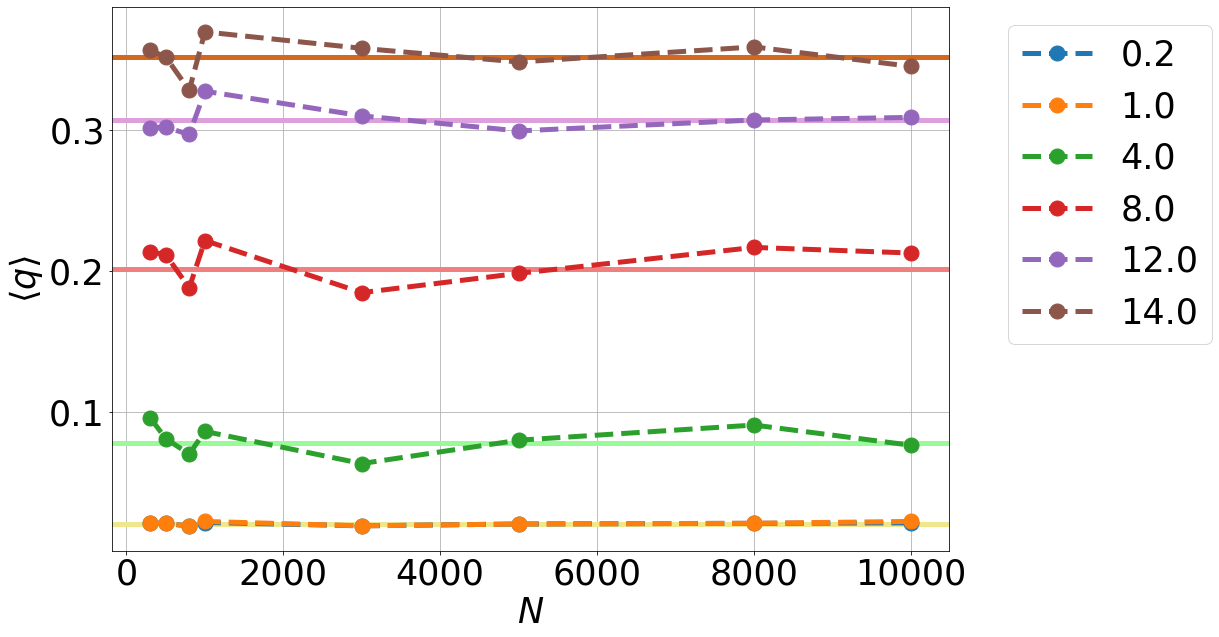} 
  \caption{$\rho = 0.2$}
  \label{fig:system-third}
\end{subfigure}%
\begin{subfigure}{.5\textwidth}
  \centering
  \includegraphics[width=1\linewidth]{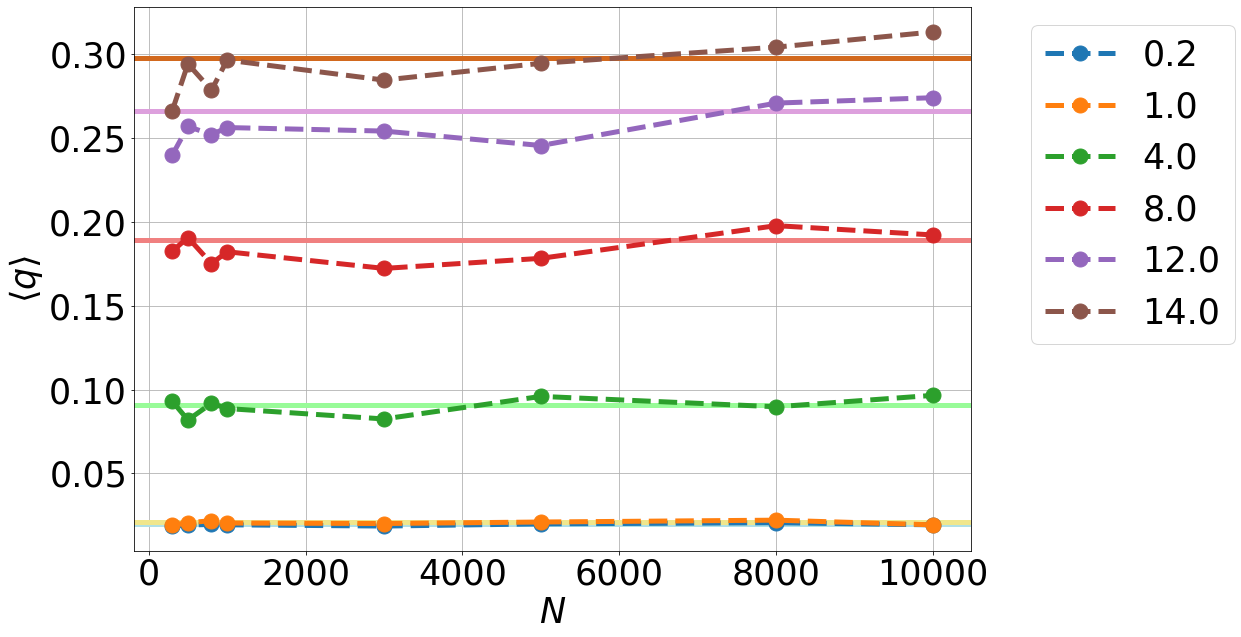} 
  \caption{$\rho = 0.3$}
  \label{fig:system-fourth}
\end{subfigure}%
\caption{\textbf{Probability of default and the system scale} Probability of default is plotted on the y-axis for maximally diversified (complete) networks with the numbers of banks $N = \{ 300,500,800,1000,3000,5000,8000,10000 \}$ on the x-axis. At time $t = 0$ the shock values are $\vec{\sigma} = (-1.1, -0.75, 0)$ , with probabilities $\vec{p} = (0.02, 0.09, 0.89)$, respectively. In different subplots (\textbf{a)}-\textbf{d)}) we vary the correlation coefficient from $0$ to $0.3$. The subplots indicate that the correlation-induced probability of default does not depend on the system size, which is further supported with horizontal lines that represent the analytical result for the limit of an infinite network.}
\label{fig:system_size}
\end{figure*}

We show how the expected fraction of default scales with the size of the system with correlation (Fig. \ref{fig:system-second}, \ref{fig:system-third},\ref{fig:system-fourth}) and without correlation (Fig. \ref{fig:system-first}). We plot expected fractions of default on complete networks for network sizes $N = \{ 300,500,800,1000,3000,5000,8000,10000 \}$. The shock values and probabilities used in simulations are $\vec{\sigma} = (-1.1, -0.75, 0)$ , and $\vec{p} = (0.02, 0.09, 0.89)$. We vary the correlation levels $\rho = \{ 0,0.1,0.2,0.3 \}$ and use the set of interbank leverages ${\Lambda}_b = \{ 0.2,1,4,8,12,14 \}$. The simulation results are plotted as points connected with dashed lines. In addition to that we depict theoretical results in the limit of an infinite network with horizontal lines. 

It is clear from Fig. \ref{fig:system-first} that the system size plays a role when the correlations are absent. The theoretical limits show two limits for convergence, $\langle q_{lim} \rangle = \{ 0.02,1.0 \}$, which correspond to no contagion and the entire network defaulting, respectively. We can see the dashed lined approaching either of the two limits as the system size increases.

The introduction of correlations changes the picture, and in  Fig. \ref{fig:system-second}, \ref{fig:system-third}, \ref{fig:system-fourth} we see that the expected fraction of default remains approximately constant for all the system sizes $N$, and that it corresponds to the theoretical limits.

\begin{figure}[!htbp]
\centering
\begin{subfigure}{\textwidth}
\includegraphics[width=1\linewidth]{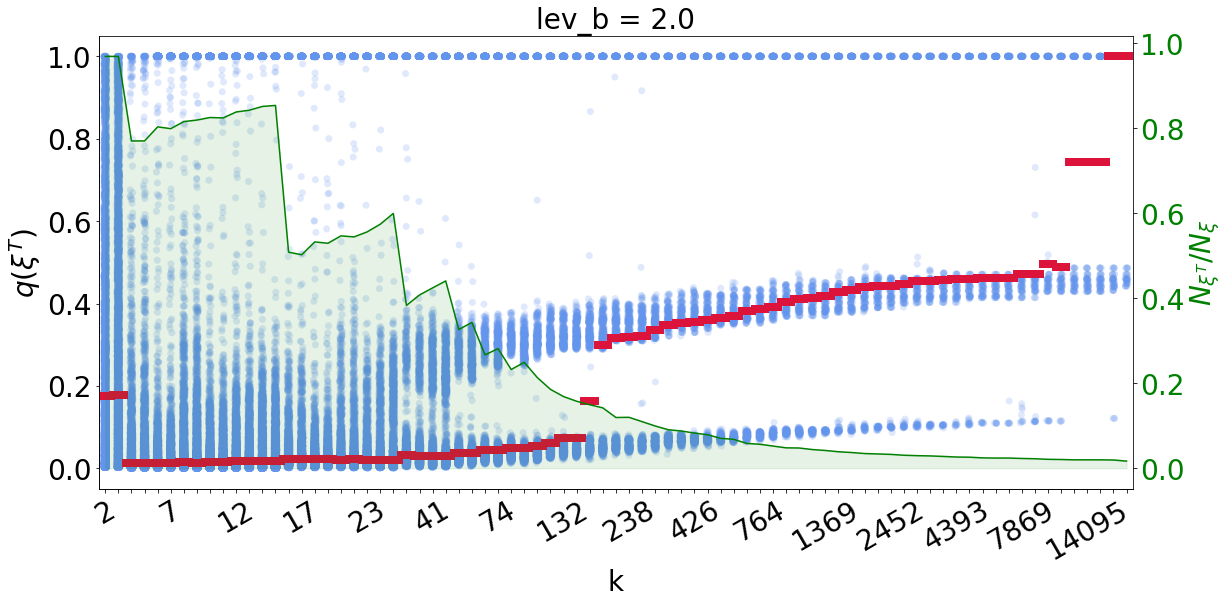}
\caption{${\Lambda}_b = 2$}
\label{fig:distribution-a}
\end{subfigure}
\begin{subfigure}{.33\textwidth}
  \centering
  \includegraphics[width=1\linewidth]{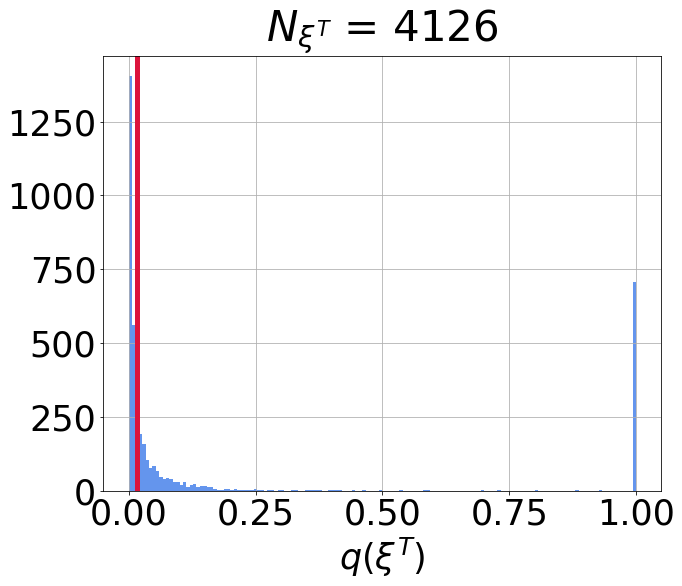} 
  \caption{$k = 10$}
  \label{fig:distribution-b}
\end{subfigure}%
\begin{subfigure}{.33\textwidth}
  \centering
  \includegraphics[width=1\linewidth]{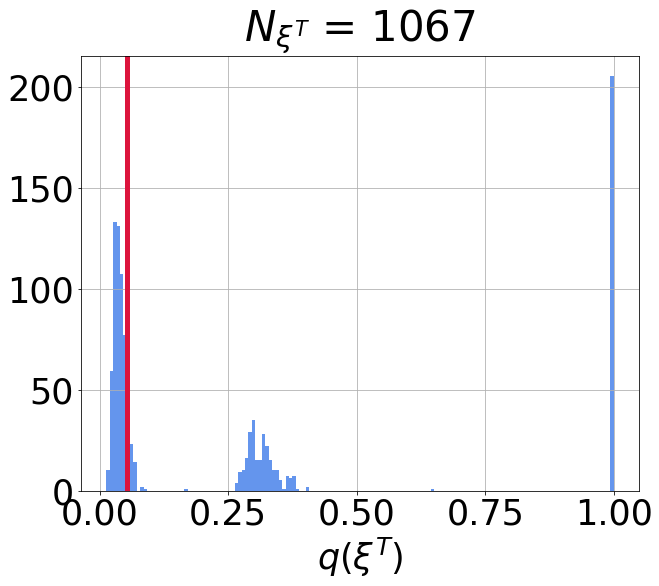} 
  \caption{$k = 105$}
  \label{fig:distribution-c}
\end{subfigure}%
\begin{subfigure}{.33\textwidth}
  \centering
  \includegraphics[width=1\linewidth]{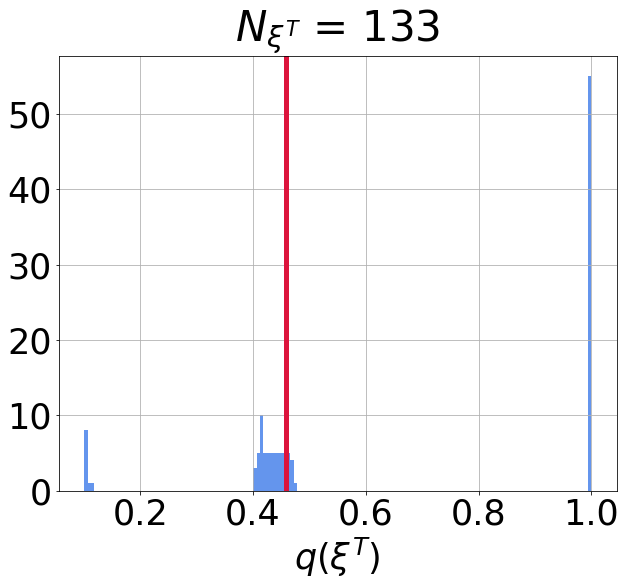} 
  \caption{$k = 4393$}
  \label{fig:distribution-d}
\end{subfigure}%
  \caption{\textbf{Analysis of the simulation results.} We present results from a default process with parameters $\vec{p} = (0.02, 0.09, 0.89), \vec{\sigma} = (-1.1, -0.75, 0.0)$, ${\Lambda}_b$, $\rho = 0.3$. In panel \textbf{a)} we present the fraction $N_{{\xi}^T}/N_{\xi}$ of the simulation realizations that contained network propagation with the green line and the shaded area. On top of that, we plot in blue the one-dimensional histograms containing individual fractions of default from realizations with propagation, for each network degree $k$. The red line represents the median value of those fractions of default. In panel \textbf{b)},\textbf{c)} and \textbf{d)}, for network degrees $k = 10, 105, 4393$ we depict the one-dimensional histograms from the panel \textbf{a)} in two dimensions, stating the total number of processes with propagation in the titles.}
\label{fig:distribution}
\end{figure}

\begin{figure}[!htbp]
    \centering
    \includegraphics[width = \textwidth]{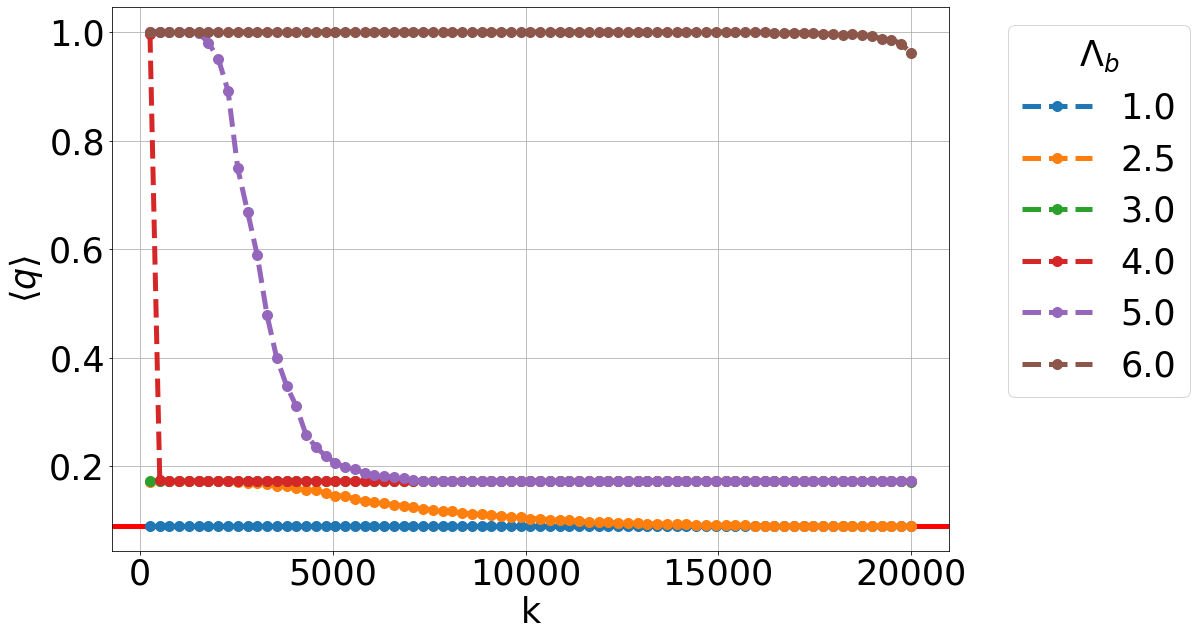}
    \caption{\textbf{Convergence limits.} We show in this figure that, for the set of parameters ($\vec{p} = (0.09, 0.083, 0.827)$, $\vec{\sigma} = (-1.1, -0.75, 0)$) and with no correlation, the system converges to three possible limits after diversification, depending on the interbank leverage ${\Lambda}_b$. On y-axis we plot the expected fraction of default, and on the x-axis the network degree, ranging from $k=252$ to the complete network $k=19998$.}
    \label{fig:third_compartment}
\end{figure}

\begin{figure}[!htbp]
  \centering
  \includegraphics[width=1\linewidth]{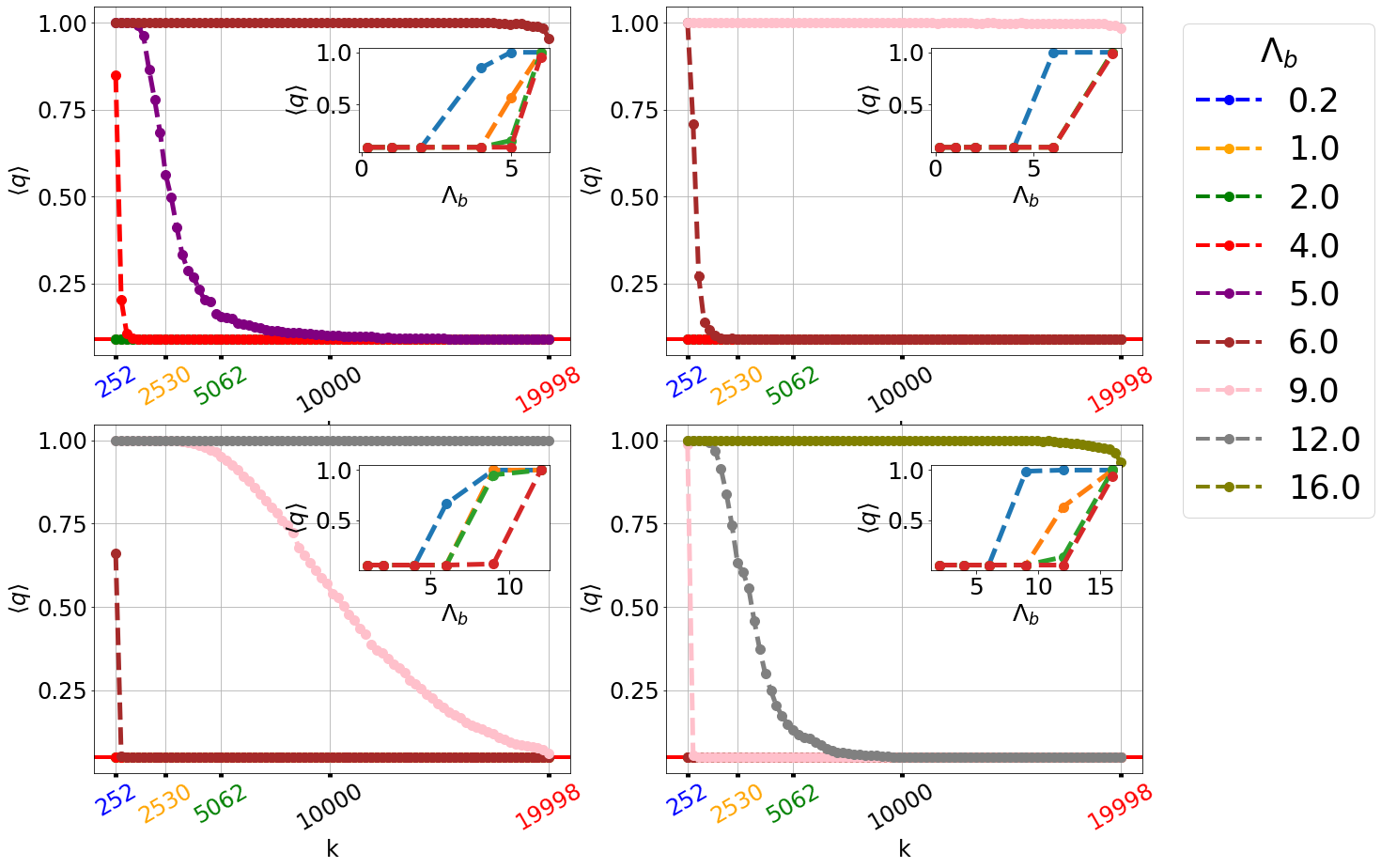}
    \caption{\textbf{Switching between sub- and supercriticality.}
    On x-axis the degree of the network starts from $252$ and goes up to a complete network, $k=19998$, and on the y-axis we plot the expected fraction of default, $\langle q \rangle$. In the top panel the initial probability of default is fixed, $\vec{p} = (0.09, 0.083, 0.827)$ and we change the shock sizes from left panel $\vec{\sigma} = (-1.1, -0.5, 0)$ to right panel $\vec{\sigma} = (-1.1, -0.25, 0)$. In the lower panel we use a different initial probability of default $\vec{p} = (0.05, 0.086, 0.864)$ but for the same shock sizes. Interbank leverage takes values from the range in ${\Lambda}_b = \{0.2,1,2,4,5,6,9,12,16\}$ and it is shown that with the decrease of the initial probability, or the shock size, the system behaviour stays the same if we increase ${\Lambda}_b$ appropriately. In the panel insets, we plot the dependence of $\langle q \rangle$ on the interbank leverage ${\Lambda}_b$, for the chosen values of $k$ that are marked on the x-axis of the plot containing the inset.} 
    \label{fig:sub_sup}
\end{figure}

\subsection{General form of results}

We simulated the process described in the Model section on a network of $N = 10000$ banks. The network size is chosen to be the largest possible, in accordance with the computational constraints, to reduce the finite system effects. The fraction of defaulted vertices is obtained as a result of each of the $N_{\xi} = 5000$ simulations performed for each network degree $k$. The final result for each $k$ is a distribution of the default fractions. 

We first present the result for $\vec{p} = (0.02, 0.09, 0.89), \vec{\sigma} = (-1.1, -0.75, 0.0)$, ${\Lambda}_b = 2$ and $\rho = 0.3$. To describe the shape of the results in Fig. \ref{fig:distribution-a}, we use only the subset of the results in which the propagation did happen on the network.
We stress that in a sample of $N_{\xi}$ processes that start on a network, network propagation of default does not happen for every one of them, and we mark with $N_{{\xi}^T}$ the number of realizations ${\xi}^T$ in which it does happen. 
With the green shaded area we mark the fraction of the process realizations in which the propagation on the network was triggered, $N_{{\xi}^T}/N_{\xi}$.
We denote the fraction of default for such process realizations ${\xi}^T$ with $q({\xi}^T)$.
The blue dots represent a one dimensional histogram of the individual default fractions resulting from propagation. The red line shows the median value of those fractions. The median is shown instead of the mean, since it is a more appropriate statistic due to the multimodal shape of the data. In the Figures \ref{fig:distribution-b},\ref{fig:distribution-c},\ref{fig:distribution-d}, we show the histograms and the median value in two dimensions for three different values of network degree.
We can conclude from this figure that, in the case with correlation, diversification lowers the probability that propagation occurs in the system. Nevertheless, the probability of default conditioned on propagation increases.

Further on, using all results obtained from the simulations, we show that, for uncorrelated external shocks, the diversified systems end up in either a subcritical or a supercritical regime, i.e. either with diversification the expected fraction of default gets reduced to some limit, or the entire system defaults every time regardless of it. The regime the system chooses depends on the process and network parameters. 

Finally, when correlation is introduced to the external shocks, we show that, in a previously well diversified system, the expected fraction of default takes up a finite value regardless of the diversification.


The process and network parameter ranges were chosen from the empirical values of the real world interbank and asset markets.

\subsection{Limiting values of convergence for the probability of defaults}

Depending on the parameters of the network (interbank leverage) and the process (shock sizes and probabilities), after the complete diversification the probability of default will converge to one of the limiting values. The maximal number of the possible limiting values of default probability is equal to the number of different shock values that we use in our model. For example, for the shock $\vec{p} = ( p_1, p_2, p_3 )$, the limiting values are equal to $p_1$, $p_1+p_2$ and $p_1 + p_2 + p_3 =1$. The interpretation of this phenomenon lies in the fact that the initial shock with 3 possible values splits the $N$ banks into compartments with 3 different levels of vulnerability $(N_1,N_2,N_3)$. The compartment $N_1$ always defaults because the shock $p_1$ causes default right away. 
In addition to that, the compartment $N_2$ has the equity decreased to ${\epsilon}_2$, and will default if a fraction ${\epsilon}_2/{\Lambda}_b$ of its interbank assets is lost. If the probability of initial default $p_1$ is equal or larger than the fraction ${\epsilon}_2/{\Lambda}_b$ (in a large and complete network), every bank in $N_2$ will have a fraction of initially defaulted borrowers (i.e. $p_1$) large enough to drive it into default. 
Therefore, for some combinations of the shock probabilities and sizes, and the interbank leverage, beside the default of the compartment $N_1$ we can have either the compartment $N_2$, or both $N_2$ and $N_3$ (the whole system) going into default as shown in Fig. \ref{fig:third_compartment}

\subsection{Subcritical and supercritical regime result from uncorrelated shocks }

Depending on the parameters of the network (interbank leverage) and the process, the probability of default can either decrease to some limit with diversification, or remain at the highest level, equal to one. We refer to these two modes of behaviour as subcritical and supercritical regimes. For a fixed probability and size of the shocks, increasing the interbank leverage leads from the subcritical into the supercritical mode. The same effect, switching from a subcritical into supercritical regime, is obtained with increasing the probability of initial default $p_0$, and increasing the size of the shock ${\epsilon}_1$.

In Fig. \ref{fig:sub_sup} we present this effect on four panels by varying the initial probability in the vertical direction ($\vec{p} = (0.09, 0.083, 0.827)$ in the top panel to $\vec{p} = (0.05, 0.086, 0.864)$ in the bottom panel) and the shock size in the horizontal direction ($\vec{\sigma} = (-1.1, -0.5, 0)$ in the left panel to $\vec{\sigma} = (-1.1, -0.25, 0)$ in the right panel). We see that an appropriate choice of the interbank leverage ${\Lambda}_b$ recovers both regimes regardless of the intial probability of the shocks and their sizes.

Therefore, we see that maximal diversification can mitigate the effect of the CVA network component only if the interbank leverage is in the subcritical range for the given parameters of the shock (probability and size).

\subsection{Correlated shocks cause a non-vanishing probability of default}

\begin{figure*}[!htbp]
\begin{subfigure}{.5\textwidth}
  \centering
  \includegraphics[width=1\linewidth]{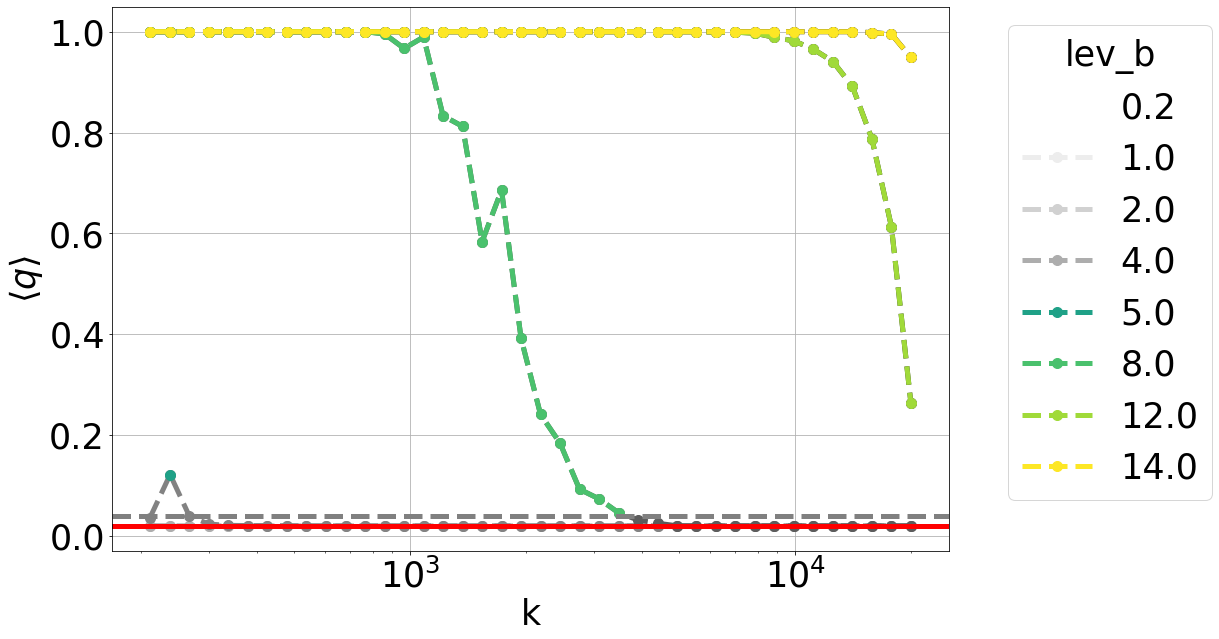}
  \caption{$\rho = 0$}
  \label{fig:corr-first}
\end{subfigure}%
\begin{subfigure}{.5\textwidth}
  \centering
  \includegraphics[width=1\linewidth]{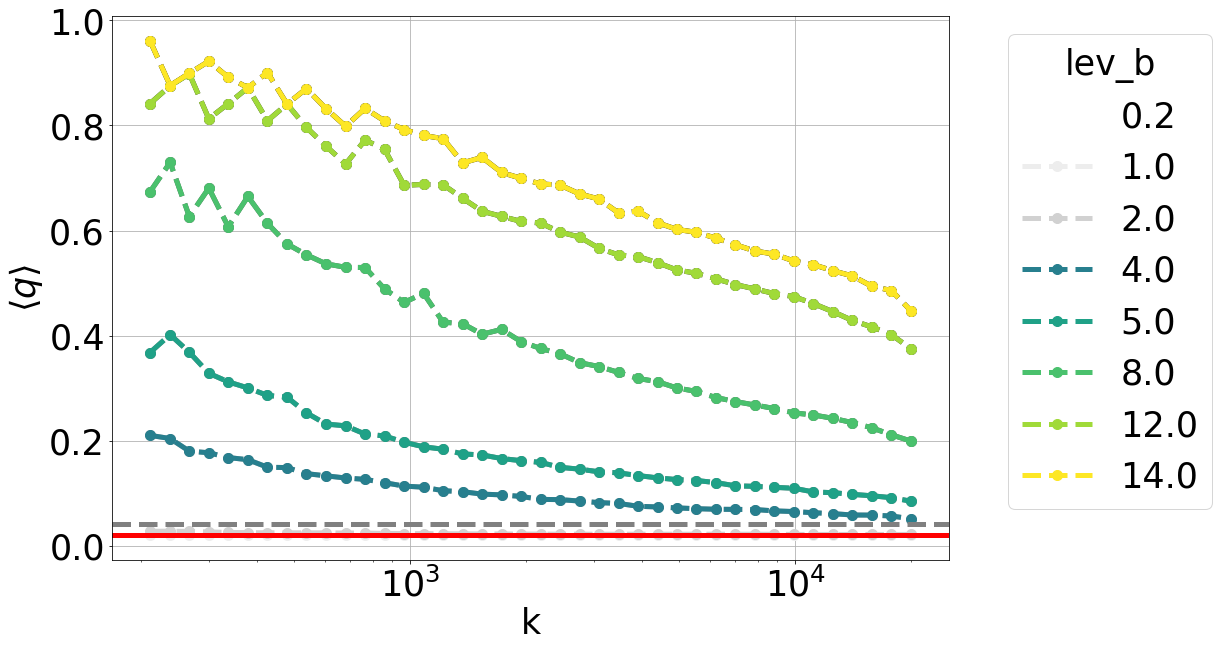} 
  \caption{$\rho = 0.1$}
  \label{fig:corr-second}
\end{subfigure}
\begin{subfigure}{.5\textwidth}
  \centering
  \includegraphics[width=1\linewidth]{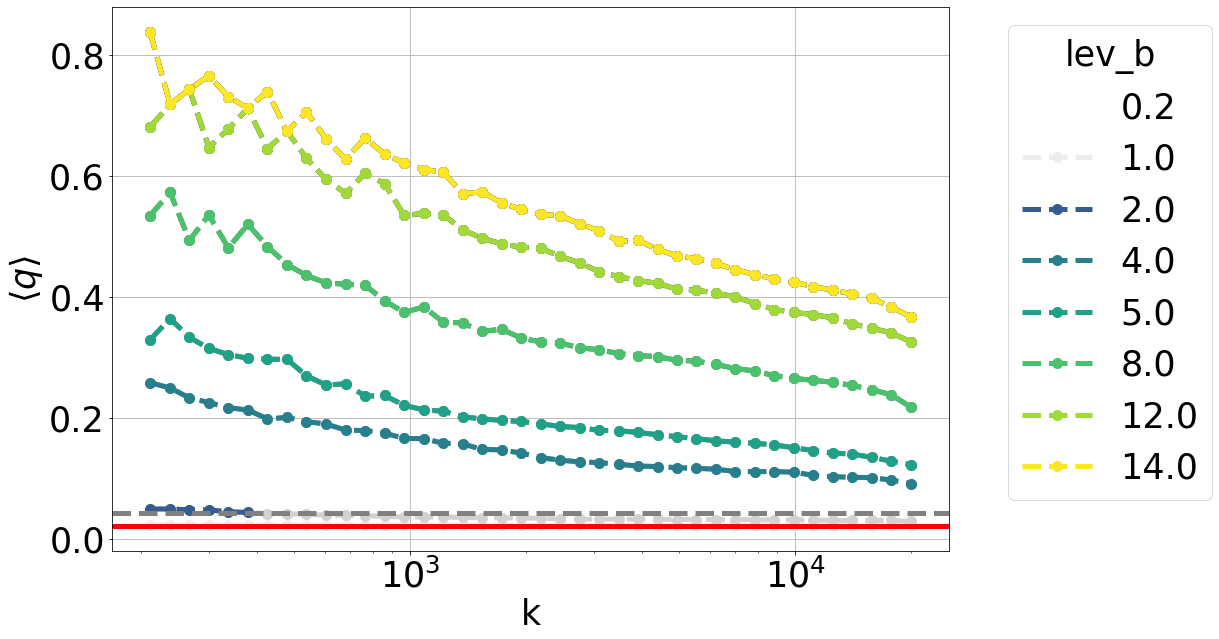} 
  \caption{$\rho = 0.2$}
  \label{fig:corr-third}
\end{subfigure}%
\begin{subfigure}{.5\textwidth}
  \centering
  \includegraphics[width=1\linewidth]{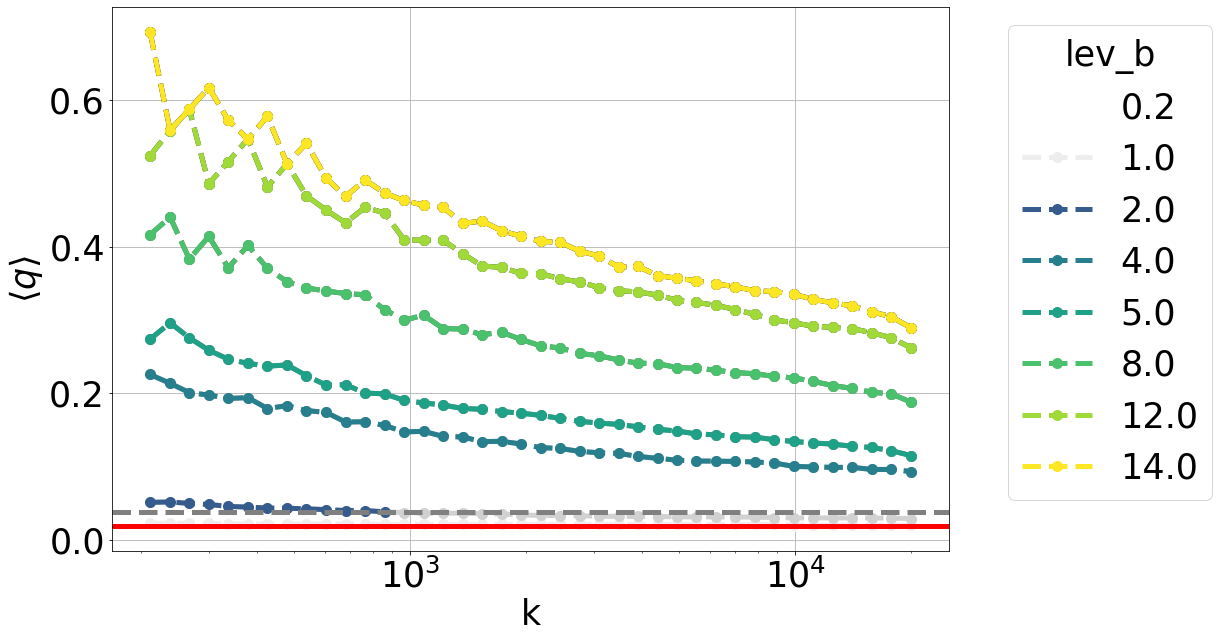} 
  \caption{$\rho = 0.3$}
  \label{fig:corr-fourth}
\end{subfigure}%
\caption{\textbf{$N = 10000$} Probability of defaults with respect to the correlation between shocks. The degree of the vertices $\langle k \rangle$ starts from $211$ and goes up to $19998$ (complete network), for the  number of vertices $N = 10000$. The plots do not show a smaller $\langle k \rangle$ than $211$, as the transitional effects in that range are due to the choice of the network, and not relevant to real systems.  The shock values are $\vec{\sigma} = (-1.1, -0.75, 0)$ , with probabilities $\vec{p} = (0.02, 0.09, 0.89)$, respectively. We vary the interbank leverage ${\Lambda}_b = \{0.2,1,2,4,5,8,12,14\}$. The red horizontal line represents the initial probability of default (without taking into account the network of liabilities). The grey dashed line represents double the values of the initial probability of default. Subfigures \textbf{a)-d)} show different correlation coefficients $\rho = \{0, 0.1, 0.2, 0.3\}$. Values where the introduced correlation increases the probability of default by $100 \%$ or more are shown in colour, values below that are shown in greyscale.}
\label{fig:corr_varying_double}
\end{figure*}
We model the correlation on the equity levels using the Gaussian copula, as described in the Model section.

In a real market, banks invest in multiple assets outside the banking network. Their portfolios can overlap, and the values of the assets, and thus their default probabilities, can be correlated. However, simulating such a system for realistic parameters can be computationally too expensive.

Nevertheless, it is clear that a correlation between the shocks on external assets would lead to the equity levels of banks being correlated after the shock. We show in Appendix Fig. \ref{fig:rho} that the empirically measured sample correlation between the equity values ${\rho}^E$ is a non-decreasing function of the empirically measured sample correlation ${\rho}^A$ imposed on the external assets. We argue that for every joint probability function of the equity levels, we can find multiple joint probability functions of the external assets, and that justifies reducing the model for correlations of the shocks to equity level.

For the shock levels that we use in simulations, we choose parameters realistic for markets,  values $\vec{\sigma} = (-1.1, -0.75, 0)$, with the probabilities  $\vec{p} = (0.02, 0.09, 0.89)$ and we simulate these shocks with 4 different levels of correlation $\rho = \{0,0.1,0.2,0.3\}$.

The results of the simulations are presented in Figure \ref{fig:corr_varying_double}. We plot the expected fraction of default in dependence of the network degree. The horizontal red full line represents the fraction of banks defaulted by the initial external shocks. The horizontal grey dashed line represents the level of probability that is double than the fraction of banks defaulted from initial shocks. 
The lines with dots, showing the expected fraction of default for different interbank leverages, are presented in two color schemes, depending on whether their value exceeds double the initial expected fraction (the grey dashed line) or not.
In the area where default fraction coming from the network is smaller than double the initial default fraction they are presented in greyscale, and in the area where it is larger than double, they are in colour.
Figures \ref{fig:corr-second}, \ref{fig:corr-third} and \ref{fig:corr-fourth} show how the introduction of correlation on the equity values raises the level of the expected fraction of network default for leverage values that were previously well diversified, lines ${\Lambda}_b = \{0.2,1,2,4,5\}$, on Fig. \ref{fig:corr-first}. 

On the other hand, if the interbank leverage ${\Lambda}_b$ was high enough to maximize the expected fraction of default without the correlations (${\Lambda}_b = 14$ on Fig. \ref{fig:corr-first}), the introduction of correlations will reduce it. 
In the case with no correlation, the fraction of initially defaulted banks comes from the binomial distribution, and for the large systems deviates very little from the expected value, which is in this case equal to $p_0$. If the expected fraction of initially defaulted banks is enough to default the entire system, and, as we concluded, the initial fraction of defaults does not vary much between realizations, the system will default in every realization and the expected total fraction of defaults will be equal to 1.
However, if the probabilities of initial defaults of banks are correlated, although the expected value of the fraction of defaults stays the same, the fractions of default are no longer drawn from a binomial, but from a distribution in the Equation \ref{eqn:p_corr}, which has positive skewness. Therefore, realizations where the fractions of initially defaulted banks are very close to zero and thus not large enough to trigger the default of the entire system become very likely. Taking an expectation over the sample of all realizations necessarily gives a value less than 1.

\section{Conclusion}

In this paper, we have studied the behavior of Credit Valuation Network Adjustment (CVNA) as an extension of CVA to context of a network of financial contracts, also in the presence and correlated and non-correlated shocks. We have shown that for the broad range of parameters, the CVNA is significantly larger than CVA, and should be taken into account. We have also demonstrated that the results are not finite size effects, using the mapping of the studied dynamics to the mean-field approximation of the threshold model of \citet{watts2002threshold}. Although this model was not solvable analytically, it presents an interesting alternative for the numerical evaluation of CVNA in financial networks.

We have then shown that, depending on the leverage, the presence of correlations can either increase or reduce the probabilities of defaults compared to the case of uncorrelated shocks.


We believe that the effects of correlated shocks on CVNA and systemic risk demonstrated in this paper are important enough for practitioners to warrant future research in this direction. An important future contribution would consider financial networks with heterogeneous agents drawn from real data and the way correlations in shocks affect  CVNA of the contract. Even further, applying CVNA in the case of a dynamic financial system for new contracts could in principle affect the evolution of the financial system network structure, which we believe is an important academic question. 

\section{Acknowledgments}
VZ wish to acknowledge the support of the Croatian Science Foundation (HrZZ) Projects No. IP-2019-
4-3321. VZ and IB acknowledge support form QuantiXLie Centre of Excellence, a Project co-financed by the Croatian Government and European Union through the European Regional Development Fund - the Competitiveness and Cohesion Operational Program (Grant KK.01.1.1.01.0004, element leader N.P.). IB acknowledges support by the Federal Commission for Scholarships for Foreign Students (FCS) through the Swiss Government Excellence Scholarship.

\bibliographystyle{apalike}   
\bibliography{references}

\section{Appendix}
\subsection{Appendix A}

\begin{figure*}[h!]
\centering
\includegraphics[width=0.5\linewidth]{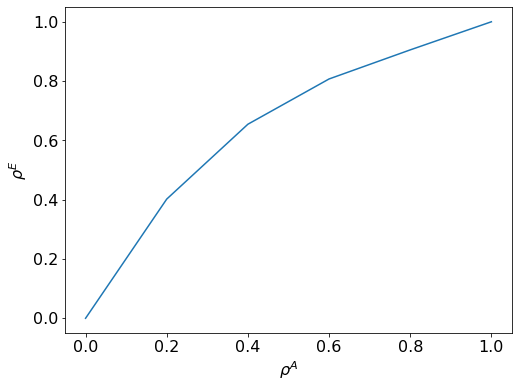}
\caption{Dependence of the correlation on the equity values ${\rho}^E$ on the correlation on the external assets ${\rho}^A$}
\label{fig:rho}
 \end{figure*}

\subsection{Appendix B}

For the case $\rho = 0$ (independent shocks), when there is no correlation on the shock levels, we can show that the distribution of the counts of banks $N_{\mu}$ hit with each shock ${\sigma}_{\mu}$ equals the multinomial probability distribution.
 
If we set the probabilities of the shock values $\vec{\sigma} = ({\sigma}_1,{\sigma}_2,{\sigma}_3)$, that result in equity values $\vec{\epsilon} = 1 -\vec{\sigma}$, to be $\vec{p} = (p_1,p_2,p_3)$, the cumulative distribution function of the random variable of the shock $\mathcal{S}$ will be, as in \ref{eqn:shock_sample}:
\begin{equation}
    F_{\mathcal{S}}(x) = 
\begin{cases} 
        p_1 \coloneqq p_I, & x = {\sigma}_1 \\
        p_1 + p_2 \coloneqq p_{II}, & x = {\sigma}_2 \\
        1, & x = {\sigma}_3.
   \end{cases}
\end{equation}

We use the inverse transform sampling to sample the shock random variable $\mathcal{S}$, i.e. we transform the uniform random variable $U$ using the inverse CDF $F_{\mathcal{S}}^{-1}(U) = \mathcal{S}$.

For $N$ banks, we write the total probability that $N_1$ banks end up being shocked by ${\sigma}_1$ with probability $p_1$, $N_2$ banks get shocked by ${\sigma}_2$ with probability $p_2$ and $N_3$ banks get shocked by ${\sigma}_3$ with probability $p_3$ as:

\begin{align}
    \mathcal{P}&(\vec{N};N;\vec{p}) = \nonumber \\
    &= \binom{N}{N_1,N_2,N_3} P(U_1 \leq p_I, \dotsc, p_I < U_{i} \leq p_{II},\dotsc, U_N > p_{II})\nonumber \\
    &= \binom{N}{N_1,N_2,N_3}P(U_1 \leq p_I, \dotsc, p_I < U_{i} \leq p_{II}, \dotsc,  U_N \leq 1 - p_{II})\nonumber\\
    &= \binom{N}{N_1,N_2,N_3} \left[C(p_I,\dotsc, p_{II}, \dotsc, 1-p_{II}) - C(p_I, \dotsc, p_I,\dotsc, 1-p_{II}) \right]\nonumber\\
    &= \binom{N}{N_1,N_2,N_3} \left[p_I \dotsm   p_{II}  \dotsm (1-p_{II}) - p_I \dotsm p_I \dotsm  (1-p_{II})\right]\nonumber\\
    &= \binom{N}{N_1,N_2,N_3} p_I \dotsm  (p_{II}-p_I) \dotsm  (1-p_{II}) \nonumber\\
    &= \binom{N}{N_1,N_2,N_3} \underbrace{p_1 \dotsm p_1}_{N_1} \cdot \underbrace{p_2 \dotsm p_2}_{N_2} \cdot \underbrace{p_3 \dotsm p_3}_{N_3}\nonumber \\
    &= \binom{N}{N_1,N_2,N_3}\quad p_1^{N_1} \cdot p_2^{N_2} \cdot p_3^{N_3}.
    \label{eqn:p_noncorr}
\end{align}
We expressed the probability using the uniform variables that we use for sampling. The multinomial coefficient counts the number of ways we can get the same counts, $N_1$, $N_2$ and $N_3$, from a set of $N$ banks. In the third row we utilize the definition of the copula, and switch to the fourth row using the fact that the independence of all the variables in the copula leads to the explicit form of the copula to be simply a product of all those variables \citep{schmidt2007copula}. From there on, with a bit of algebra, and 
by returning to the originally defined probabilities $p_1,p_2,p_3$, we recover a multinomial probability mass function.

\subsection{Appendix C}

In the Appendix B, we obtained the distribution of the numbers of banks $\vec{N} = (N_1,N_2,N_3)$ affected by the set of shocks $\vec{\sigma} = ({\sigma}_1,{\sigma}_2,{\sigma}_3)$ that occurred with probabilities $\vec{p} = (p_1,p_2,p_3)$, in the case when all the shocks were independent. Now we will look more closely into the case when correlation between the shocks is introduced. We wish to obtain the probability mass function $\mathcal{P}(N_1,N_2,N_3;N;p_1,p_2,p_3;\rho)$ for the numbers of banks $\vec{N} = (N_1,N_2,N_3)$ in compartments $\mu=1,2,3$ when correlation $\rho$ is present.

In the model, we introduced the correlations between shocks that reduce equity values by employing the Gaussian copula $C^{Ga}_{\Sigma}(\textbf{u})$ to correlate our uniform variables $U_i$ before we used them for sampling shocks with Equation (\ref{eqn:shock_sample}). However, the approach used to derive the distribution in the uncorrelated case (\ref{eqn:p_noncorr}) relies on the independence of the uniform variables, so we need to take a different perspective.
Therefore, instead of using the uniform random variables $U_i$ as in (\ref{eqn:p_noncorr}), we base this derivation on Gaussian random variables, $Z_i \sim \mathcal{N}(0,1)$, and use the property of a single factor copula that enables the decomposition of the correlated variables $Z_i$ into the sum of independent Gaussian variables $X \sim \mathcal{N}(0,\rho)$ and $Y_i \sim \mathcal{N}(0,1-\rho)$. The random variable $X$ is the part common to all the sampled variables, while $Y_i$ is the idiosyncratic part, specific for each bank $i$.
\begin{equation}
    Z_i = X + Y_i, \quad Z_i \sim \mathcal{N}(0,1), \quad  X \sim \mathcal{N}(0,\rho), \quad Y_i \sim \mathcal{N}(0,1-\rho).
\end{equation}
Next, we need to transform the subintervals for sampling  with limits $p_I, p_{II}$, from the $[0,1]$ segment, into the real line $\langle -\infty,\infty \rangle$ with limits $z_I,z_{II}$. Respective limits need to divide both distributions into the same percentiles:
\begin{align}
    p_1 &= p_I = F_Z(z_I), \\ \nonumber
    p_2 &= p_{II} - p_I = F_Z(z_{II}) - F_Z(z_I), \\ \nonumber
    p_3 &= 1 - p_{II} = 1 - F_Z(z_{II}).
\end{align}
From this we can get $z_I,z_{II}$ easily:
\begin{align}
    z_I &= F_Z^{-1}(p_I), \\ \nonumber
    z_{II} &= F_Z^{-1}(p_{II}).
\end{align}
We start the derivation of the probability mass function in the same way as in (\ref{eqn:p_noncorr}), by counting all the combinations of shocks on banks that can occur for the same counts $N_1$, $N_2$ and $N_3$:
\begin{align}
    \mathcal{P}&(\vec{N};N;\vec{p}) = \\ \nonumber
    &= \binom{N}{N_1, N_2, N_3} P(\underbrace{Z_1 \leq z_I, \dotsc}_{N_1}, \underbrace{z_I < Z_{i} \leq z_{II},\dotsc,}_{N_2} \underbrace{Z_{i+k+1} > z_{II}, \dotsc, Z_N > z_{II}}_{N_3}).
\end{align}
Then we look separately into the joint probability and utilize the decomposition of the random variables we defined earlier, $Z_i = X + Y_i$, so that we can express probability distribution function variables ${\gamma}_i$ as a sum ${\gamma}_i = \alpha + {\beta}_i$. The corresponding probability distribution functions for the new variables are the univariate PDF $f_X$ and the multivariate PDF $f_{\textbf{Y}}$. We first express the probability that Gaussian variables will take the values that correspond to bank counts $N_1$, $N_2$ and $N_3$ as the multidimensional integral of the the Gaussian multivariate PDF $f_{\textbf{Z}}$, and use the abbreviation $\,d\vec{\gamma} = \,d{\gamma}_1 \dotsm \,d{\gamma}_i \dotsm \,d{\gamma}_N$:
 \begin{align}
     P&(Z_1 \leq z_I, \dotsc, z_I < Z_{i} \leq z_{II},\dotsc, Z_N > z_{II})  =\\ \nonumber
     & = \int_{-\infty}^{z_I} \dotsm \int_{z_I}^{z_{II}} \dotsm \int_{z_{II}}^{\infty} f_{\textbf{Z}}({\gamma}_1,\dotsc, {\gamma}_i,\dotsc,{\gamma}_N) \,d\vec{\gamma} \\ \nonumber
     & =  \int_{-\infty}^{z_I} \dotsm \int_{z_I}^{z_{II}} \dotsm \int_{z_{II}}^{\infty}  f_{\textbf{Z}}(\alpha + {\beta}_1,\dotsc, \alpha + {\beta}_{i}, \dotsc,\alpha + {\beta}_N)  \,d\vec{\gamma} \\ \nonumber
      & =  \int_{-\infty}^{z_I} \dotsm \int_{z_I}^{z_{II}} \dotsm \int_{z_{II}}^{\infty}  \left( \int_{-\infty}^{\infty} f_{X}(\alpha) f_{\textbf{Y}}({\gamma}_1 - \alpha\dotsc, {\gamma}_i - \alpha,\dotsc,{\gamma}_N - \alpha) \,d\alpha \right)  \,d\vec{\gamma} \\ \nonumber
       & =  \int_{-\infty}^{z_I} \dotsm \int_{z_I}^{z_{II}} \dotsm \int_{z_{II}}^{\infty}  \left( \int_{-\infty}^{\infty} f_{X}(\alpha) f_{Y}({\gamma}_1 - \alpha)\dotsm f_{Y}({\gamma}_i - \alpha)\dotsm f_{Y}({\gamma}_N - \alpha) \,d\alpha \right)  \,d\vec{\gamma} \\ \nonumber
       & = \int_{-\infty}^{\infty} f_{X}(\alpha) \left( \int_{-\infty}^{z_I}  f_{Y}({\gamma}_1 - \alpha) \,d{\gamma}_1 \right) \dotsm  \left(\int_{z_I}^{z_{II}} f_{Y}({\gamma}_i - \alpha) \,d{\gamma}_{i}\right) \dotsm \left( \int_{z_{II}}^{\infty} f_{Y}({\gamma}_N - \alpha) \,d{\gamma}_N \right)  \,d\alpha  \\ \nonumber
      & = \int_{-\infty}^{\infty} f_{X}(\alpha) \left( \int_{-\infty}^{z_I}  f_{Y}({\gamma}_1 - \alpha) \,d{\gamma}_1 \right) \dotsm  \left(\int_{-\infty}^{z_{II}} f_{Y}({\gamma}_i - \alpha) \,d{\gamma}_i - \int_{-\infty}^{z_I} f_{Y}({\gamma}_i - \alpha) \,d{\gamma}_i\right) \dotsm \\ \nonumber
      & \dotsm \left( 1-\int_{-\infty}^{z_{II}} f_{Y}({\gamma}_N - \alpha) \,d{\gamma}_N \right)  \,d\alpha  \\ \nonumber
      & = \int_{-\infty}^{\infty} f_{X}(\alpha)  F_{Y}(z_I - \alpha) \dotsm  \left(F_{Y}(z_{II} - \alpha)  -  F_{Y}(z_I - \alpha) \right) \dotsm \left( 1-F_{Y}(z_{II} - \alpha) \right)  \,d\alpha  \\ \nonumber
      & = \int_{-\infty}^{\infty} f_{X}(\alpha)  (F_{Y}(z_I - \alpha))^{N_1} \cdot  (F_{Y}(z_{II} - \alpha)  -  F_{Y}(z_I - \alpha))^{N_2} \cdot ( 1-F_{Y}(z_{II} - \alpha))^{N_3}  \,d\alpha.
 \end{align}
In the third row, we replace the ${\gamma}_i$ variables with the sum $\alpha + {\beta}_i$, and proceed to express $f_{\textbf{Z}}$ as a convolution of $f_X$ and $f_{\textbf{Y}}$.
In the fifth row we use the independence of the idiosyncratic uniform random variables $Y_i$ to separate the $f_{\textbf{Y}}$ PDF into univariate $f_{Y_i} = f_Y$ PDFs. After integrating all the integrals depending on ${\gamma}_i$ in the seventh row, we get a product of $N$ CDFs of the random variable $Y$ that depend on the limits $z_I$ and $z_{II}$ according to the class $\mu$ they belong to ($N_{\mu}$ of them in each class $\mu$). The final result is an integral, going over the real axis, of the product of a PDF of the random variable $X$
and each CDF raised to its respective power $N_{\mu}$.
To make sense of the expression, we define new shock probabilities that depend on the variable $\alpha$:
 \begin{align}
 & {\pi}_1(\alpha) = F_{Y}(z_I - \alpha) = F_{Y}(F_Z^{-1}(p_I) - \alpha), \\ \nonumber
    & {\pi}_2(\alpha) = F_{Y}(z_{II} - \alpha) - F_{Y}(z_I - \alpha) = F_{Y}(F_Z^{-1}(p_{II}) - \alpha) - F_{Y}(F_Z^{-1}(p_I) - \alpha), \\ \nonumber
    & {\pi}_3(\alpha) = 1 - F_{Y}(z_{II} - \alpha) = 1 - F_{Y}(F_Z^{-1}(p_{II}) - \alpha). 
\end{align}
Finally, the complete expression for the probability mass function is structured as an integral of the multinomial distribution with varying probabilities weighted by the Gaussian distribution of the common random variable $X$:
 \begin{equation}
    \mathcal{P}(\vec{N};N;\vec{p}) = \int_{-\infty}^{\infty} f_{X}(\alpha) \binom{N}{N_1, N_2, N_3}  {\pi}_1(\alpha)^{N_1} \cdot  {\pi}_2(\alpha)^{N_2} \cdot {\pi}_3(\alpha)^{N_3}  \,d\alpha.
\label{eqn:p_corr}
\end{equation}

\end{document}